\newcommand{\Rmnum}[1]{\expandafter\@slowromancap\romannumeral #1@}
\newtheorem{proposition}{Proposition}
\newtheorem{theorem}{Theorem}
\begin{document}
	\begin{CJK}{UTF8}{gbsn}
%
\title{Pipelining Split Learning in Multi-hop Edge Networks}
\markboth{IEEE Transactions on  Mobile Computing}%
{Shell \MakeLowercase{\textit{et al.}}: Bare Demo of IEEEtran.cls for IEEE Journals}
%



	
 \author{Wei Wei,~\IEEEmembership{Graduate Student Member,~IEEE,}
	Zheng~Lin,~\IEEEmembership{Graduate Student Member,~IEEE,}
    Tao~Li,~\IEEEmembership{Graduate Student Member,~IEEE,}
    Xuanheng~Li,~\IEEEmembership{Member,~IEEE,}
	and~Xianhao~Chen,~\IEEEmembership{Member,~IEEE}

\thanks{The work was supported in part by the Research Grants Council of Hong Kong under Grant 27213824 and in part by HKU IDS Research Seed Fund under Grant IDS-RSF2023-0012.
\textit{(Corresponding author: Xianhao Chen)}
}
\thanks{ Wei Wei, Zheng Lin, Tao Li and Xianhao Chen are with the Department of Electrical and Electronic Engineering, University of Hong Kong, Pok Fu Lam, Hong Kong SAR, China. Xianhao Chen is also with HKU Musketeers Foundation Institute of Data Science, University of Hong Kong, Pok Fu Lam, Hong Kong SAR,  China. (e-mail: weiwei@eee.hku.hk; linzheng@eee.hku.hk; 
lthku999@connect.hku.hk;
xchen@eee.hku.hk).
}
 \thanks{Xuanheng Li is with the School of Information and Communication Engineering, Dalian University of Technology, Dalian 116023, China (e-mail: xhli@dlut.edu.cn).}
}

\maketitle

\begin{abstract}
To support large-scale model training, split learning (SL) enables multiple edge devices/servers to share the intensive training workload. However, most existing works on SL focus solely on two-tier model splitting. Moreover, while some recent works have investigated the model splitting and placement problems for multi-hop SL, these solutions fail to overcome the resource idleness issue, resulting in significant network idle time. In this work, we propose a pipelined SL scheme by addressing the joint optimization problem of model splitting and placement (MSP) in multi-hop edge networks. By applying pipeline parallelism to SL, we identify that the MSP problem can be mapped to a problem of minimizing the weighted sum of a bottleneck cost function (min-max) and a linear cost function (min-sum). Based on graph theory, we devise a bottleneck-aware shortest-path algorithm to obtain the optimal solution. Besides, given the MSP outcomes, we also derive the closed-form solution to the micro-batch size in the pipeline. Finally, we develop an alternating optimization algorithm of MSP and micro-batch size to solve the joint optimization problem to minimize the end-to-end training latency. Extensive simulations have demonstrated the significant advantages of our algorithm compared to existing benchmarks without pipeline parallelism.
\end{abstract}

\begin{IEEEkeywords}
Mobile edge computing (MEC), split learning, wireless multi-hop network, pipeline parallelism, combinatorial optimization problem (COP).
\end{IEEEkeywords}

%
\IEEEpeerreviewmaketitle

\section{Introduction}

{Training AI models in the cloud is considered the status quo paradigm, which harnesses the power of large-scale computing units to train powerful AI models. However, considering the fact that most data-generating devices are located at the network edge, cloud training faces critical challenges, including excessive data transmission latency ($30 \times $ to $ 100 \times$ greater than that experienced in edge environments \cite{8016573}), high bandwidth costs, scalability issues, and data privacy concerns \cite{lin2025hierarchical,hu2024accelerating,lin2024splitlora}.} As a result, edge learning becomes an emerging training paradigm that leverages computing capabilities located close to data sources for AI model training~\cite{ZW_spectrum}. By minimizing privacy exposure and data transmission costs, edge learning is expected to play a vital role in various mission-critical applications such as healthcare~\cite{9309177,mehandru2024evaluating,tang2024merit}, industrial automation~\cite{9139976,ZW2024ultra-LoLa}, and smart-city operations~\cite{7876955,lin2025leo}, where directly sharing personal, enterprise, or government data is bandwidth-costly or prohibited by data protection regulations \cite{voigt2017eu}.


However, as AI models expand in size, it becomes increasingly challenging and time-consuming to train advanced AI models on a \textit{single} edge server~\cite{lin2024split}; for example, state-of-the-art on-device models, such as TinyLLaMa, consist of over 1.1 billion parameters \cite{zhang2024tinyllama,10835069}, impose substantial training workload that can easily overwhelm an edge server, resulting in \textit{excessive training latency} and \textit{memory overflow}. To tackle the above challenges, \textit{multi-hop} split learning (SL) can be employed to facilitate the training of giant models at the network edge~\cite{gupta2018distributed,vepakomma2018split,ha2021secure}. The initial motivation behind vanilla SL is to train a model in a privacy-enhancing manner, which is achieved by splitting a model between a server and a user for training~\cite{gupta2018distributed,10529950}. By extending it to multi-hop SL, a large AI model can be partitioned into multiple submodels and distributed across geographically dispersed edge servers. In practice, these edge servers can either be enterprise private edge servers or public edge servers co-located with base stations in mobile networks. Regardless of their types, when forming a mesh of computing facilities, they constitute a distributed yet powerful computing infrastructure for large-scale model training \cite{HuaweiTech2023}. As shown in Fig. \ref{fig:pipeline_comp}(a), splitting a VGG-16 model among multiple edge servers can significantly reduce the training latency. Moreover, model splitting can also decrease the memory demands per server, which is equally crucial as the memory space is one of the most scarce resources on edge servers/devices.

\begin{figure}[t!]

 \subfigure [\centering{Total latency of pipelined SL versus the number of servers.}]{
	{

  \includegraphics[width=3.89375cm]{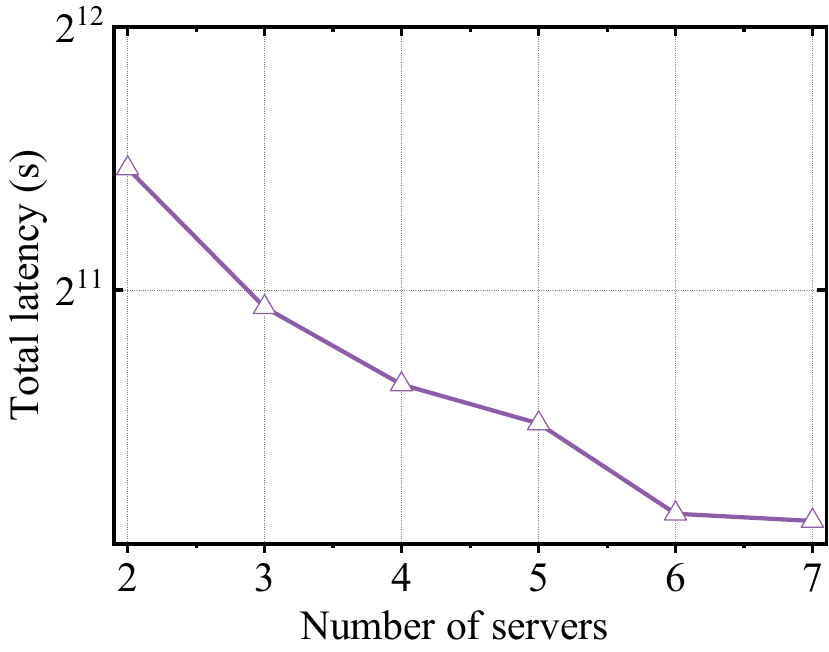}
			}
 }
 \subfigure [\centering{Comparison between pipelined SL and SL without pipeline parallelism.}]{
	{
\includegraphics[width=3.89375cm]{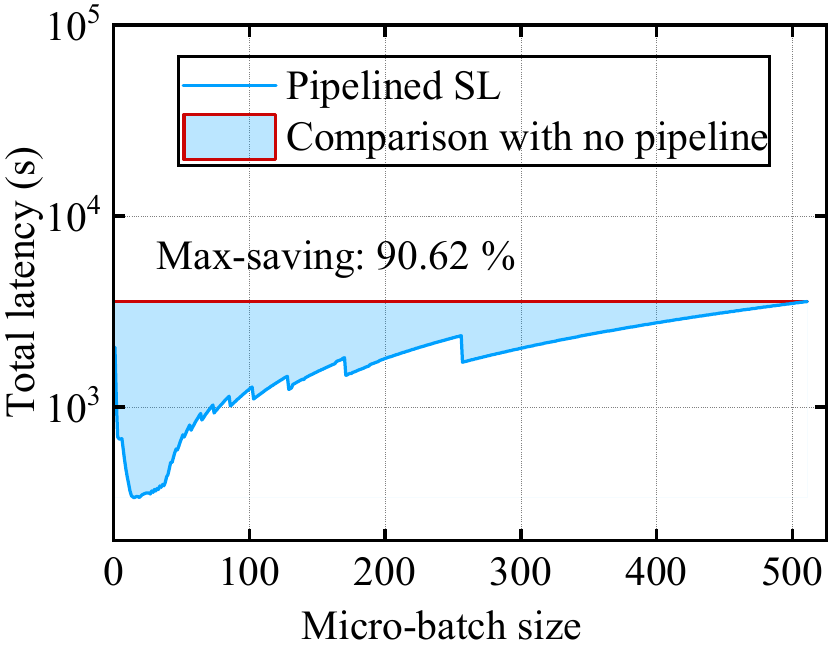}
	}
 }
\caption{Total latency in pipelined SL, which trains a VGG-16 model~\cite{simonyan2014very} on the CIFAR-10 dataset~\cite{NIPS2012_c399862d} under  IID settings. The computing capability of each edge server (NVIDIA GeForce RTX 4090) is uniformly distributed within [1, 10] TFLOPS. The total available bandwidth ranges from 10 MHz to 200 MHz.} 
		\label{fig:pipeline_comp}
\end{figure}

While multi-hop SL provides a promising solution to train a large-sized model at the network edge, implementing this framework encounters two major difficulties. The first difficulty comes from \textit{resource heterogeneity}. Since model training in edge networks involves servers with significantly heterogeneous computing, memory, and communication capabilities~\cite{10835069,khouas2024trainingmachinelearningmodels}, inappropriate model splitting and placement can result in bottleneck servers/links, leading to intolerable latency. For instance, a typical edge server might have a GPU like the NVIDIA GeForce RTX 4090 with 82 TFLOPS, while an edge device like the Raspberry Pi 4 performs at just 13.5 GFLOPS, leading to a performance difference of nearly four orders of magnitude~\cite{RaspberryPiSpecs,NVIDIATeslaSpecs}.
The second difficulty stems from \textit{resource idleness}. Given the interdependency of forward and backward passes in multi-hop SL, the majority of servers/links sit idle during training, resulting in excessive latency. This phenomenon becomes even severer as the number of hops increases, hampering the scalability of training large-sized models in an edge computing network.

Despite the fact that resource optimization of multi-hop SL can significantly influence its system performance, none of the existing schemes address resource heterogeneity and resource idleness as a whole. On the one hand, prior works on distributed training usually consider multi-GPU or multi-server training in cloud centers, assuming \textit{homogeneous} GPU configurations and \textit{identical} inter-GPU communication latency~\cite{huang2019gpipe,10.1145/3341301.3359646}. As a result, 
model splitting and placement are normally not optimized therein. However, model splitting and placement under heterogeneous communication-computing resource constraints are essential considerations in edge networks. 
On the other hand, a few recent works have addressed model splitting and placement in edge networks by considering heterogeneous communication and computing capabilities. Nevertheless, since these works mostly focus on multi-hop split inference problems~\cite{9562523,zw_AIOutage,zhiyan_ISEA_survey} or consider a single batch of training data successively going through the forward and backward passes, none of them mitigate resource idleness issues, as only one server or communication link can be active at a time during the process.

To address all the above issues, in this paper, we propose a pipelined SL scheme as our answer to effective implementations of multi-hop SL in edge networks. Unlike existing multi-hop split learning or inference schemes~\cite{9562523,10.1145/3037697.3037698,8737614,9984691}, our proposed framework involves an essential pipeline parallelism procedure, which divides a training dataset into multiple micro-batches for parallel processing across edge servers~\cite{liang2023survey}. As illustrated in Fig. \ref{fig:pipeline_comp}(b), compared with SL without pipeline parallelism, the proposed scheme can considerably reduce training latency by mitigating network idleness upon forward and backward passes. Based on this framework, we study the optimization problem of model splitting and placement (MSP)  for pipelined SL in edge networks with limited communication-computing resources. To minimize the end-to-end latency, we discern that the resulting MSP problem can be mapped to a problem of minimizing the weighted sum of a bottleneck cost function (min-max) and a linear cost function (min-sum), which can be solved exactly through graph theory. Specifically, the linear cost stems from the latency for the first micro-batch to be finished, whereas the bottleneck cost comes from the pipeline process. A block coordinate descent (BCD) is then developed to solve the joint problem by also incorporating the optimization of micro-batch size. Our key contributions are summarized below.

\begin{itemize}
\item To our knowledge, this is the first work that formulates a unified optimization framework that jointly determines model splitting \& placement (MSP) and micro-batch size for pipelined split learning in multi-hop edge networks, with the explicit goal of minimizing end-to-end training latency under heterogeneous communication, computation, and memory constraints.

\item We show that the resultant MSP subproblem can be mapped to a problem with combined min-max and min-sum objective. By mapping the problem into a graph, we develop a \textit{bottleneck-aware shortest-path} algorithm to obtain the optimal solution.

\item Given the fixed MSP results, we obtain the optimal closed-form solution to micro-batch size for the micro-batching subproblem. Subsequently, we devise a BCD-based method to obtain an efficient sub-optimal solution to the joint optimization problem.

\item Simulations have shown the significant advantages of our approaches compared with existing multi-hop SL benchmarks.
\end{itemize}

The remainder of this paper is organized as follows. Section~\ref{Rel_Work} introduces related work. Section~\ref{system_model} elaborates on the system model. We formulate the optimization problem in Section~\ref{formulation}, develop the corresponding solution approach in Section~\ref{solution}, and provide the simulation results in Section \ref{simulations}. Finally, we conclude the paper in Section~\ref{conclusion}.


\section{Related Work}\label{Rel_Work}
Several existing works have studied multi-hop split machine learning, including split inference and learning, under computing, memory, and/or communication constraints. Kang et al. \cite{10.1145/3037697.3037698,9944188, 10298247,9839238,10193767} first exploit the potential of distributing DNN inference between cloud servers and mobile devices to optimize latency, energy consumption, and throughput. Specifically, a lightweight scheduler has been introduced to partition DNN tasks in cloud-edge systems dynamically. In \cite{9562523}, a scheme called HiveMind has been devised to facilitate multi-hop split inference in 5G networks. By reformulating the model splitting problem as a min-cost graph search, HiveMind efficiently adapts to real-time network dynamics, significantly reducing signaling overhead.  Moreover, Hu et al.\cite{8737614} optimally partition DNNs between edge and cloud computing based on dynamic network conditions, which addresses the computational and transmission trade-offs. Xiao et al.
\cite{9984691}  propose an efficient multi-hop edge inference, which allows mobile devices to dynamically select the optimal partition point of deep learning models and choose collaborative edge servers based on real-time conditions. However, although split inference and SL bear similarities, some significant differences exist. Particularly, split inference often takes one data sample at a time rather than processing a batch of training data. This implies that pipeline parallelism, considered essential for the SL process, does not appear in split inference. Mathematically, our work will show that the latency minimization problem of pipelined SL turns out to be a min-max-min-sum problem, as the total latency accounts when the first micro batch enters the multi-hop networks until the last micro batch comes out.

Pipeline parallelism is a widely studied topic in distributed training~\cite{10.1145/3458817.3476145,huang2019gpipe,kosson2021pipelinedbackpropagationscaletraining}. The basic idea  is to make computing and communication ``interleaved'' across different devices (often GPUs) to reduce idle time. PipeDream~\cite{10.1145/3341301.3359646} uses a one forward pass followed by one backward pass (1F1B) scheduling approach to interleave forward pass (FP) and backward pass (BP) and addresses weight staleness issues arising from asynchronous backward updates. GPipe~\cite{huang2019gpipe} uses synchronous stochastic gradient descent (SGD) and divides mini-batches into smaller micro-batches to minimize bubble time. Synchronization happens at the end of each mini-batch by aggregating gradients from all micro-batches. Moreover, Chimera \cite{10.1145/3458817.3476145} employs bidirectional pipelines for efficient training of large-scale neural networks, which minimizes the number of pipeline bubbles by up to 50\% compared to GPipe.
Nevertheless, this line of research does not consider the MSP problem under heterogeneous communication-computing constraints -- a feature that edge networks possess.

Multi-hop SL has also been investigated in some literature.  In \cite{10.1145/3538641.3561500}, RNN is partitioned into sub-models and deployed on multiple mobile devices for training via inter-device communication. DAPPLE \cite{10.1145/3437801.3441593} combines data and pipeline parallelism for synchronous training of large DNN models. It features a dynamic programming-based planner to split and place model layers on devices. In \cite{tian2021jmsnasjointmodelsplit}, Tian et al. devise a model splitting and neural architecture search framework for SL in a mesh network. However, pipeline parallelism has not been considered in these works. The most related work is \cite{tirana2024mpslmultihopparallelsplit}, which proposes a parallel SL framework addressing multi-hop MSP problem.  Nevertheless, this framework  does not account for communication latency in the MSP optimization problem, overlooking the effects of network topology and bandwidth. In addition, it employs a fixed batch size throughout training without taking advantage of joint batching and MSP optimization.  To our best knowledge, our paper presents the first pipelined SL framework to address the MSP optimization under \textit{communication-computing} resource constraints.

\begin{table}[t]
{
	\centering\caption{Summary of Key Notations.}
	{\footnotesize
	\setlength\tabcolsep{3pt}
	\renewcommand\arraystretch{1}
	\newcommand{\tabincell}[2]{\begin{tabular}{@{}#1@{}}#2\end{tabular}}%
	\begin{threeparttable}
		\begin{tabular}{c|c}
			\bottomrule
			\bottomrule
			{\tabincell{c}{\textbf{Notations}}} & {\tabincell{c}{\textbf{Descriptions} }}  \\ 
			\hline
			$I$ & The number of layers of a neural network   \\
			{\tabincell{c}{\textbf{$K$} }} & The number of submodels  \\
			$N$ & The number of edge servers   \\
  $M$ & The number of clients   \\
			{\tabincell{c}{\textbf{$x_{ik}$} }} & {\tabincell{c}{
   The cutting strategy, where $x_{ik}=1$ represents the last layer  \\ of the $k$-th submodel with the index $i$ and $x_{ik}=0$ otherwise
         }} \\
			{\tabincell{c}{\textbf{$y_{kn}$} }} & {\tabincell{c}{The placement strategy, where $y_{kn}$ =1 if the $k$-th  submodel  \\ is deployed to  client/server $n$ and 0 otherwise }} \\
			{\tabincell{c}{\textbf{$B_m$} }} & The number of  total data samples  from client $m$   \\
			{\tabincell{c}{\textbf{$B$} }} & The number of   total data samples  in the mini-batch   \\
			{\tabincell{c}{\textbf{$b$} }} & The number of server-side data samples in a micro-batch   \\
            	{\tabincell{c}{\textbf{$b_m$} }} & The number of client-side data samples in a micro-batch   \\
			{\tabincell{c}{\textbf{$D_k$} }} & The data size of activations   \\
			{\tabincell{c}{\textbf{$D_k^\prime$} }} & The data size of activations' gradients   \\
			{\tabincell{c}{\textbf{$\eta_k$} }} & The memory cost to process the $k$-th
submodel   \\
			{\tabincell{c}{\textbf{$M_n$} }} & The maximum GPU memory of client/server $n$   \\
			{\tabincell{c}{\textbf{$P$} }} & The  number of parameters   \\
			{\tabincell{c}{\textbf{$W_{nn^{\prime}}$} }} & The channel bandwidth between $n$ and $n^\prime$   \\
			{\tabincell{c}{\textbf{$N_0$} }} & The noise power  \\
			{\tabincell{c}{\textbf{$p_n$} }} & The transmit power of client/server $n$   \\
			{\tabincell{c}{\textbf{$d_{nn^{\prime}}$} }} & The distance between  $n$ and  $n^{\prime}$   \\
			{\tabincell{c}{\textbf{$\gamma$} }} &  The path loss exponent   \\
	{\tabincell{c}{\textbf{$r_{nn^{\prime}}$} }} &  {\tabincell{c}{The achievable data rate over wireless link \\ between  $n$ and $n^{\prime}$}}  \\
	{\tabincell{c}{\textbf{$f_n$} }} &  The computing capability of client/server $n$\\
	{\tabincell{c}{\textbf{$\kappa_n$} }} &  The computing intensity of client/server $n$\\
	{\tabincell{c}{\textbf{$w_i$} }} & {\tabincell{c}{The computing workload per data sample \\of FP for the first $i$-th layers} } \\
	{\tabincell{c}{\textbf{$\rho_i$} }} & {\tabincell{c}{The computing workload per data sample \\of BP for the first $i$-th layers }} \\
	{\tabincell{c}{\textbf{$\varphi_i$} }} & The size of activations of the $i$-th layer  \\
	{\tabincell{c}{\textbf{$\phi_i$} }} & The size of activations' gradients of the $i$-th layer  \\
   \toprule
		\end{tabular}%
	\end{threeparttable}
}
	\label{notation}%
    }
\end{table}%
\section{System Model}\label{system_model}
\subsection{Architecture Overview}
Fig. \ref{multi-system} illustrates a multi-hop SL framework based on pipeline scheduling with multiple edge nodes including clients and edge servers. In practice, the next-generation user plane function (UPF) enables the routing of application traffic among users and mobile edge computing (MEC) servers associated with different network entities, thereby facilitating a multi-hop SL paradigm~\cite{9562523}. Each edge server, e.g., an MEC node in the platform, can execute FP and BP model trainings. Moreover, we consider a scenario where  $N$ edge servers $\bm S = \{ 1,  2, ..., N \}$  participate in the multi-hop SL system to process a neural network with $I$ layers cooperatively. 
{We define the overall set of nodes as $\mathcal{N}$ and denote the set of clients as $\mathcal{N}_c$, where $\mathcal{N}_c \subset \mathcal{N}$.}
Next, we define the decision variables, i.e., model split matrix and model placement matrix as follows,
\begin{itemize}
    \item Model split matrix $\bm x=\{x_{ik}|\ i\in[1,I],\ k\in[1,K-1]\}$: $x_{ik}=1$ represents that the last layer of the $k$-th submodel is layer $i$ and $x_{ik}=0$ otherwise.
    \item  {Model placement matrix $\bm y=\{y_{kn}|\ k\in[1,K],\ n\in\mathcal{N}\}$: For edge server $n\in\mathcal{N}\setminus \mathcal{N}_c$, $y_{kn}=1 
    $ denotes that the $k$-th submodel $(k\in[2,\ K])$ is deployed on server $n$, and $y_{kn}=0$ otherwise. 
    For client  $n\in\mathcal{N}_c$, we enforce $y_{kn}=1 
    $, indicating that the first submodel $(k=1)$ should be deployed on the client side for preventing raw data transmissions to edge servers.}
\end{itemize}

Herein, $K$ denotes the maximum number of submodels after model splitting. Notably, the copies of the first submodel ($k=1$) are placed on $M$ clients  $\bm C = \{ 1,  2, ..., M \}$. In each training round, client $m$ processes mini-batch $B_m$ of data samples. {
 Assume that the clients connect to an access edge server, the edge server draws mini-batch $B=\sum_{m=1}^M B_m$ of data samples for split training based on mini-batch SGD.
As shown in Fig. \ref{multi-system}, the mini-batch $B$  will be divided into smaller \textit{micro-batch} $b$  for pipeline parallelism,  thus enabling parallel process across edge nodes.} Besides, we assume that the average data rate between servers and the network topology remain unchanged within each training round. 
Moreover, we mainly focus on one training round while omitting the training round index. 
 Our optimization goal is to train the neural network with the minimum end-to-end latency by jointly optimizing the cut
layers, submodel placement, and micro-batching. The training process in each training round is elaborated as follows. 

\begin{figure}[t!]
\centering
\includegraphics[width=9cm]{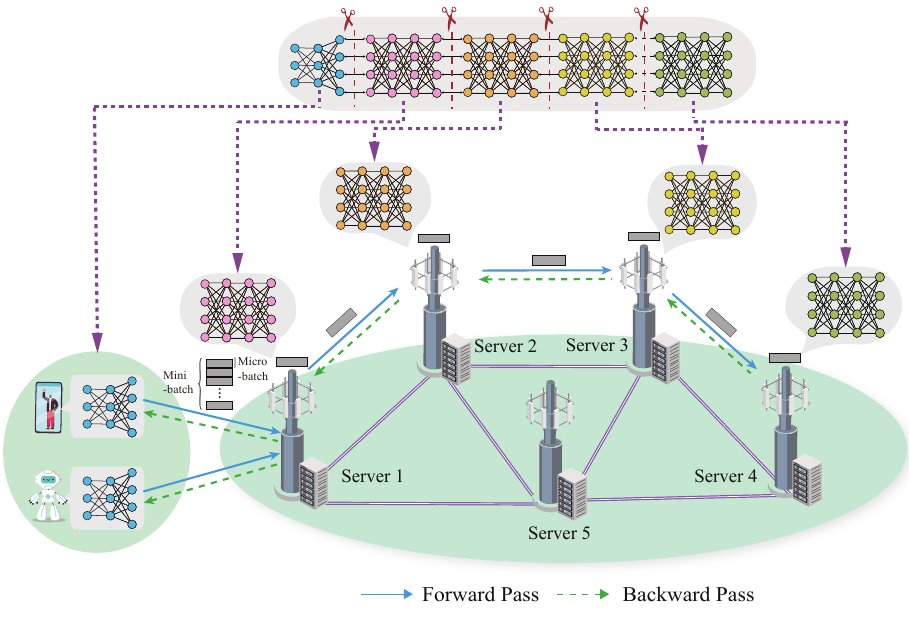}
	\flushleft
\caption{Pipelined split learning in multi-hop edge networks.}
\label{multi-system}
\end{figure}



\subsection{FP \& Activation Transmissions}
\subsubsection{The FP process} 
In the FP process, submodel $k$ (placed on a client or a client or an edge server) receives the activations of the prior submodel (if any) as the inputs, performs FP to produce the output activations, and feeds the activations to the next submodel, which is placed on another edge server. The process terminates once the computation of all layers of the entire neural network is finished. {Without loss of generality, we assume each client $m$ draws $b_m$ data samples and the first server draws a micro-batch of $b=\sum_{m=1}^M b_m$ data samples to execute the  FP process. Specifically,  we arrange 
$
\left\lfloor \frac{b}{M} \right\rfloor
$
data samples to each of the first \(M-1\) clients and allocate the remaining
$
b - (M-1)\left\lfloor \frac{b}{M} \right\rfloor
$
data samples to the \(M\)-th client, which is expressed by
\begin{align}
b_m =\left\{\begin{aligned}
&\left\lfloor {b/M} \right\rfloor,\  1 \le m \le M-1, \\
&b - (M-1)\left\lfloor {b/M} \right\rfloor,\  m = M.
\end{aligned}\right.  
 \end{align}

In this way, the forward pass latency of edge server $n$ for processing the $k$-th submodel can be expressed as
\begin{align}
t_{k,n}^{\mathrm{F}}(b,\boldsymbol{x})=\left\{\begin{aligned}
	&b_m
{\kappa _n\,\delta _{kn}^{\mathrm{F}}(\boldsymbol{x})}/{f_n}+t_{0}^c
,\ \forall n\in \mathcal{N} _c,\\
&{b\,\kappa _n\,\delta _{kn}^{\mathrm{F}}(\boldsymbol{x})}/{f_n}+t_{0}^s
,\ \forall n\in \mathcal{N}\setminus \mathcal{N}_c,\\
\end{aligned}\right.  
 \end{align}
where $\boldsymbol x$ is the collection of $x_{ik}$,  $f_n$ denotes the computing capability (in CPU/GPU frequency) of server $n$,  $\kappa_n$ represents the computing intensity \cite{lin2023efficient},   $t_0^c$ and $t_0^s$  are coefficients related to initialization and model loading,
and
  \begin{align}
&\delta_{kn}^{\mathrm{F}}(\boldsymbol{x})= \notag\\&\left\{\begin{aligned}
&\sum\limits_{i=1}^{I} x_{ik} \, w_i,  \ \forall n \in \mathcal{N}_c ,\ k=1,\\
\displaystyle &\sum\limits_{i=1}^{I} \left(x_{ik}-x_{i(k-1)}\right) w_i,\ \forall n\in \mathcal{N}\setminus \mathcal{N}_c,\ k\in[2,K],\\
\end{aligned}\right. 
\end{align}
with $w_i$ being the computing workload per data sample of FP for the first $i$-th layers. }


\subsubsection{Activations transmission} 
After the $k$-th submodel completes the FP process on the client $m$ or edge server $n$, the host client/server transmits the activations to the subsequent server $n^\prime$ over a wired/wireless channel. For instance, if the connection is wireless (e.g., using 5G wireless backhaul), the achievable data rate $r_{nn^{\prime}}$ over the link is given by
\begin{align}
r_{nn^{\prime}}&=W_{nn^{\prime}}\cdot \log \left( 1+{{(d_{nn^{\prime}})}^{-\gamma}p_n}/{N_0} \right),\notag \\
& \quad \quad \quad \quad \quad \quad \quad \quad \quad \quad  \forall n \in  \mathcal{N} , \ \forall n^\prime \in \mathcal{N} ,  
\end{align}
where $p_n$ is the transmit power, $d_{nn^{\prime}}$ denotes the distance between edge nodes $n$ and  $n^\prime$, $W_{nn^{\prime}}$ is the channel bandwidth, $\gamma$ is the path loss exponent, and $N_0$ is the noise power. Besides, the data size $D_{k}(\boldsymbol{x},b)$ of the activations generated by the $k$-th submodel   is defined as
\begin{align}
D_k(\boldsymbol{x},b)=\left\{ \begin{aligned}
	&b_m  \sum\limits_{i=1}^{I}x_{ik}\varphi_i,\ k=1,\\
	&b  \sum\limits_{i=1}^{I}x_{ik}\varphi_i,\ \forall k\in[2,K-1],\\
\end{aligned} \right. 
\end{align}
where $\varphi_i$ stands for the size of activations of at layer $i$.
Thus, the communication latency can be expressed as 
\begin{align}
t_{k,nn^{\prime}}^{\mathrm{F}}=
{D_k(\boldsymbol{x},b)}/{r_{nn^{\prime}}},\ \forall n,\ n^\prime\in \mathcal{N} , \ \forall k\in [1,K-1].
\end{align}
\subsection{BP \& Activations' Gradients Transmission}
\subsubsection{The BP process} 
After a micro-batch finishes the FP process,  the BP process begins at the last layer and moves backward to the very first layer. Each submodel $k\ (k\neq K)$ receives the activations' gradients of subsequent submodel as the inputs, performs computations, and feeds the results to the preceding submodel. The process terminates once the computation of all layers is finished.
The backward pass latency  $t_{k,n}^{\mathrm{B}}(b,\boldsymbol{x})$ on server $n$ for processing the $k$-th submodel is expressed as
\begin{align}
& t_{k,n}^{\mathrm{B}}(b,\boldsymbol{x})=\left\{\begin{aligned}
&t_1^c, \ \forall n \in \mathcal{N}_c,\ 0 < b_m\leqslant b_{\mathrm{th}}^c,   \\
	&t_1^s, \ \forall n\in \mathcal{N}\setminus \mathcal{N}_c,\ 0 < b\leqslant b_{\mathrm{th}}^s,\\
    &(b_m-b_{\mathrm{th}}^c)\kappa _n\delta _{kn}^{\mathrm{B}}(\boldsymbol{x})
 /f_n+t_1^c, \\
 & \quad \quad \quad \quad \quad \quad   \ \forall n \in \mathcal{N}_c,\  b_{\mathrm{th}}^c<b_m \leqslant B, \\
 &(b-b_{\mathrm{th}}^s)\kappa _n\delta _{kn}^{\mathrm{B}}(\boldsymbol{x})
 /f_n+t_1^s, \\
 & \quad \quad \quad \quad \quad \quad  \ \forall n\in \mathcal{N}\setminus \mathcal{N}_c,\  b_{\mathrm{th}}^s<b \leqslant B,\\
\end{aligned}\right. 
\end{align}
with
\begin{align}
&\delta_{kn}^{\mathrm{B}}(\boldsymbol{x}) =\notag\\ &\left\{\begin{aligned}
& \sum_{i=1}^{I} x_{ik}\, \rho_i,  \ \forall n \in \mathcal{N}_c,\ k=1,\\
& \sum_{i=1}^{I} \bigl( x_{ik} - x_{i(k-1)} \bigr)\, \rho_i,  \ \forall n \in \mathcal{N}\setminus\mathcal{N}_c,\ k\in[2,K], \\
\end{aligned}\right. 
\end{align}
where $\rho_i$ is the computing workload per data sample of BP for the first $i$-th layers, and the coefficients $b_{\mathrm{th}}^c$, $b_{\mathrm{th}}^s$, $t_1^c$ and $t_1^s$  are determined by specific DNN models and hardware~\cite{ 9252924}.
\subsubsection{Transmissions of activations' gradients} When the BP process of a submodel is completed, activations’ gradients at the cut layer will be transmitted through the communication links. The data size $D_{k}^\prime(\boldsymbol{x},b)$ of the activations' gradients generated by the $k$-th submodel 
   is expressed as
\begin{align}
D_{k}^{\prime}(\boldsymbol{x},b)=\left\{ \begin{aligned}
	&b_m  \sum\limits_{i=1}^{I}x_{ik}\phi_{(i+1)},\ k=1,\\
	&b  \sum\limits_{i=1}^{I}x_{ik}\phi_{(i+1)},\ \forall k \in [2,K-1],\\
\end{aligned} \right. 
\end{align}
where  $\phi_{(i+1)}$ stands for the size of activations' gradients at cut layer $i+1$. Thus, 
the communication latency between node $n^{\prime}$ and $n$ for the $k$-th submodel can be expressed as~\cite{lin2024adaptsfl}
\begin{align}
t_{k,n^{\prime}n}^{\mathrm{B}}=
{D_k^\prime(\boldsymbol{x},b)}/{r_{n^{\prime}n}},\ \forall n, \ n^\prime\in \mathcal{N} , \ \forall k\in [1,K-1].
\end{align}
\subsection{Memory Consumption}
In multi-hop SL, it is crucial to avoid out-of-memory failures during model training~\cite{9428512}.
The required memory space to process the $k$-th submodel is denoted by
\begin{align}
 &\eta_k(\boldsymbol{x},b)= \notag\\
 &\left\{
\begin{aligned}
   &b_m\sum\limits_{i=1}^Ix_{ik}(
 {\widetilde \varphi} _i
 +\widetilde{\phi}_i + {\widetilde \sigma}_i + \beta_i),\ k=1,\\
 & b  \sum\limits_{i=1}^{I}(x_{ik}-x_{i(k-1)})(
 {\widetilde \varphi} _i
 +\widetilde{\phi}_i + {\widetilde \sigma}_i + \beta_i),\ k\in[2,K],  \\
\end{aligned}
\right.
\end{align}
where ${{\widetilde \varphi} _i} = \sum\limits_{i' = 1}^i {{\varphi _{i'}}} $ and ${{\widetilde \phi}_i} = \sum\limits_{i' = 1}^i {{{\phi}_{i'}}} $ represent the cumulative sum of the data size of activations and activations' gradients for first $i$ layers of neural network, ${{\widetilde \sigma} _i}$ is the data size of the optimizer state for the first $i$ layers of neural network, depending on the choice of the optimizer (e.g. SGD, Momentum, and Adam), and $\beta_i$ is the parameter size of the first $i$ layers.

\section{Problem Formulation}\label{formulation}
In our SL design, the core innovation lies in pipelining successive micro-batches under heterogeneous communication-computing resource constraints, thereby mitigating resource heterogeneity and idleness issues. However, this \textit{pipeline process in edge networks} also leads to a challenging optimization problem different from previous works~\cite{9562523,tirana2024mpslmultihopparallelsplit}. Specifically, we aim to minimize the total training latency of pipelined SL, which contains 1)  latency $T_f(\boldsymbol{x}, \boldsymbol{y},b)$ for the first micro-batch to complete the FP and BP processes, and 2) the pipeline latency for the remaining micro-batches to finish the processes. For the first part, $T_f(\boldsymbol{x}, \boldsymbol{y},b)$ is given by
\begin{align}
&T_f(\boldsymbol{x}, \boldsymbol{y},b)=\max_{n \in \mathcal{N}_c}\{t_{1,n}^{\mathrm{F}}(b,\boldsymbol{x}) + \sum\limits_{n^\prime=1}^Ny_{2n^\prime}t_{1,nn^{\prime}}^{\mathrm{F}}(\boldsymbol{x},b)\}
+ \sum_{k=2}^K\notag\\
&\sum_{n=1}^{N}{y_{kn}}\big[t_{k,n}^{\mathrm{F}}(b,\boldsymbol{x}) +t_{k,n}^{\mathrm{B}}(b,\boldsymbol{x})\big]+ \sum_{n=1}^{N}\sum_{n' = 1 }^{N}
\big\{\sum_{k=2}^{K-1}
{y_{kn}}y_{(k+1) n^{\prime}}\notag\\
&{t_{k,nn^{\prime}}^{\mathrm{F}}(\boldsymbol{x},b)}+\sum_{k=2}^{K-1}y_{(k+1) n^{\prime}}{y_{kn}}{t_{k,n^{\prime}n}^{\mathrm{B}}(\boldsymbol{x},b)} \big\}+\max_{n \in \mathcal{N}_c}\{t_{1,n}^{\mathrm{B}}(b,\boldsymbol{x})
\notag\\
&  + \sum\limits_{n^\prime=1}^Ny_{2n^\prime}t_{1,n^{\prime}n}^{\mathrm{B}}(\boldsymbol{x},b)\},\label{Tf}
\end{align}
where $\boldsymbol{x}$ is the collection of $x_{ik}$ and $\boldsymbol{y}$ is the collection of $y_{kn}$. The four terms in (\ref{Tf}) correspond to the FP and activation transmissions on clients, the FP and BP on edge servers, the communication latency among edge servers, and the BP and activations' gradients transmissions on clients, respectively.
 
Considering multiple micro-batches to be processed, we assume that  ``FP \& activations transmission'' and  ``BP \& activations’ gradients transmission'' are executed in \textit{pipeline}. To derive the pipeline latency, we need to calculate the \textit{difference of completion time of two consecutive micro-batches}, which can be derived from
\begin{equation}
\begin{aligned}
    &T_i(\boldsymbol{x}, \boldsymbol{y}, b)= \max\bigg\{\max_{k\in[2,K-1]} \big\{ \sum\limits_{n=1}^N   {{{y_{kn}}{t_{k,n}^{\mathrm{F}         
       }(b,\boldsymbol{x})}}},\sum\limits_{n=1}^N{y_{kn}}\\
  &   {t_{k,n}^{\mathrm{B}}(b,\boldsymbol{x})},\sum\limits_{n=1}^N\sum_{{n' = 1 }}^{N}{y_{kn}}y_{(k+1) n^{\prime}} {t_{k,nn^{\prime}}^{\mathrm{F}}(\boldsymbol{x},b)},
\sum\limits_{n=1}^N\sum_{{n' = 1 }}^{N}{y_{kn}}\\
  & y_{(k+1)n^{\prime}}{t_{k,n^{\prime}n}^{\mathrm{B}}(\boldsymbol{x},b)}
\big\},\max_{n \in \mathcal{N}_c}\big\{t_{1,n}^{\mathrm{F}}(\boldsymbol{x})+\sum\limits_{n^\prime=1}^Ny_{2n^\prime}t_{1,nn^{\prime}}^{\mathrm{F}}(\boldsymbol{x},b)\big\},
\\&   \max_{n \in \mathcal{N}_c}\big\{t_{1,n}^{\mathrm{B}}(\boldsymbol{x})+ \sum\limits_{n^\prime=1}^Ny_{2n^\prime}t_{1,n^{\prime}n}^{\mathrm{B}}(\boldsymbol{x})\big\}\bigg\}.
\label{max-function}
\end{aligned}
\end{equation}

Intuitively, $T_i(\boldsymbol{x}, \boldsymbol{y}, b)$ is determined by the \textit{bottleneck latency} over the multi-hop systems, which is equal to the maximum latency over any computing server or communication link. 
Moreover, the batch per training round with $B$ data samples needs to be transmitted $\lceil{B}/{b}\rceil$ times. Thus, the total latency $L_t(\boldsymbol{x}, \boldsymbol{y}, b)$ can be formulated as
\begin{equation}
\begin{aligned}
L_t(\boldsymbol{x}, \boldsymbol{y}, b)
&= T_f(\boldsymbol{x}, \boldsymbol{y},b)+\lceil{(B-b)}/{b}\rceil T_i(\boldsymbol{x}, \boldsymbol{y}, b),
\label{per-round-latency}
\end{aligned}
\end{equation}
where the second term is the pipeline latency.
In the multi-hop SL system, \textit{the overarching goal is to jointly optimize the cutting layers, model placement, and micro-batch size to minimize the total latency $L_t(\boldsymbol{x}, \boldsymbol{y}, b)$}. Thus, the optimization problem can  be formulated as
	\begin{align}
	\mathcal{P}1:	&\underset{\{ \boldsymbol{x}, \boldsymbol{y}, b \}}{\mathrm{min}} \quad L_t(\boldsymbol{x}, \boldsymbol{y}, b) \notag \\
		\mathrm{s.t.}\quad
          & \mathrm{C1:}\quad  x_{ik}\in\{0,1\},\ \forall i \in [1,I], \ k \in [1,K-1],\notag \\
                  & \mathrm{C2:}\quad  y_{kn}\in\{0,1\},\ \forall k \in  [1,K], \ n \in \mathcal{N}, \notag \\
      & \mathrm{C3:}\quad  b\in \{1,2,...,B\}, \notag \\
        & \mathrm{C4:}\quad \sum_{i=1}^I x_{ik}=1,\ \forall k \in [1,K-1], \notag \\
          & \mathrm{C5:}\quad  \sum_{i=1}^{I^\prime }x_{ik} \leqslant \sum_{i=1}^{I^\prime }x_{i(k-1)} ,
          \notag \\
          & \quad \quad \quad  \quad \quad  \quad \quad \quad \quad 
 \  \forall k \in [2,K-1], \ I^\prime \in [1, I], \notag \\
          &\mathrm{C6:}	\quad  \sum_{n=1}^N y_{kn}=1,\ \forall k \in [2,K], \notag \\
  		&\mathrm{C7:}	\quad   0 \leqslant   m_{1}(x_{i1},b)\leqslant M_n,\ \forall n \in \mathcal{N}_c , \notag\\
    &\mathrm{C8:}	\quad   0 \leqslant  \sum_{k=2}^K y_{kn} m_{k}(x_{ik},b)\leqslant M_n,\ \forall n\in \mathcal{N}\setminus \mathcal{N}_c.
    \label{problem_p1}
	\end{align}

Constraints C1 and C2 ensure that model splitting and placement are binary variables. Constraint C3 means that the micro-batch size should be no more than the mini-batch size $B$. Since gradient updating will be made after each mini-batch, $B$ is a hyperparameter that can be appropriately chosen based on learning efficiency. Other constraints are explained below.

\begin{itemize}
 	\item Cutting layer constraints:  C4 ensures the uniqueness of cut layer selection for each submodel. Notice that C4 does not exclude the case that different submodels can select the same cut layer $i$, implying that the model can be split into less than $K$ parts, where $K \leqslant I$ is a predetermined constant.  This is because splitting models may not always lead to shorter latency due to extra communication latency.   Furthermore, the layer dependency Constraint C5 ensures that the neural network layers are in order, i.e., the index of the cut layer of the $(k-1)$-th submodel is no greater than that of the $k$-th submodel (the equality holds when there is an ``empty'' submodels).

  \item The placement constraint: C6 ensures that each submodel can be placed on one edge node. It is noted that submodel $k$ can be ``empty'' as alluded to earlier. 
\item The GPU memory constraint:  Constraint C7 and Constraint C8 ensure that the memory cost to process the $k$-th
submodel is no more than the maximum GPU memory capacity $M_n$.
\end{itemize} 


The optimization Problem $\mathcal{P}1$ turns out to be a nonlinear integer programming (NILP) problem since it involves the minimization of a non-linear function $L_t(\boldsymbol{x}, \boldsymbol{y}, b)$ and a non-linear Constraint C7. In what follows, we show Problem $\mathcal{P}1$ is NP-hard. 
\begin{proposition}
    The latency minimization Problem $\mathcal{P}1$ is NP-hard.
\end{proposition}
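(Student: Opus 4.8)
The plan is to prove NP-hardness by a polynomial-time reduction from a classical NP-hard problem to a \emph{restricted} case of Problem $\mathcal{P}1$; since a problem is NP-hard whenever one of its special cases is, exhibiting one hard restriction suffices. I would reduce from \textbf{Bin Packing} (with the \textbf{Partition} problem as its two-server specialization), which fits $\mathcal{P}1$ naturally: the placement constraint C6 (each submodel assigned to exactly one node) together with the memory-capacity constraint C8 ($\sum_{k} y_{kn}\,\eta_k \le M_n$) is precisely the requirement of packing items into capacity-limited bins, so that \emph{deciding feasibility} of the placement already encodes a hard packing problem.

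To isolate this structure, I would neutralize the remaining degrees of freedom in the constructed instance. Setting the mini-batch size $B=1$ forces $b=1$ by C3 and makes the pipeline factor $\lceil (B-b)/b\rceil$ vanish, so $L_t$ collapses to $T_f$. Choosing identical server speeds $f_n$ and negligible communication (infinite bandwidth, or co-located nodes so the link latencies and the bilinear inter-server terms are zero) then makes $T_f$ a \emph{constant} across all feasible placements, because the server term $\sum_{k}\sum_{n} y_{kn}\,[t_{k,n}^{\mathrm F}+t_{k,n}^{\mathrm B}]$ equals the (placement-independent) total workload. Consequently the optimal value of $\mathcal{P}1$ is this constant when a feasible placement exists and is undefined otherwise, so the entire difficulty reduces to \emph{feasibility of the memory-constrained placement}.

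Given a Bin Packing instance -- items of sizes $a_1,\dots,a_J$ and $N$ bins of capacity $C$ -- I would build in polynomial time a $\mathcal{P}1$ instance with $N$ servers of memory $M_n=C$ and a neural network whose layers have per-sample footprints chosen (through $\widetilde\varphi_i,\widetilde\phi_i,\widetilde\sigma_i,\beta_i$ in the formula for $\eta_k$) so that, at $b=1$, the $j$-th layer contributes memory exactly $a_j$. Because splitting into finer submodels only yields smaller, more packable items while leaving the total memory unchanged, the most permissive choice of cut layers $\boldsymbol x$ (with $K$ taken large enough) places each layer in its own submodel; hence a placement satisfying C6 and C8 exists \emph{iff} the $J$ items fit into the $N$ bins of capacity $C$, i.e.\ iff the Bin Packing instance is a YES-instance. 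Thus any algorithm solving $\mathcal{P}1$ decides Bin Packing, and $\mathcal{P}1$ is NP-hard.

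The main obstacle I anticipate is the \emph{faithful-encoding} step: the submodel footprints in $\mathcal{P}1$ are not free parameters but are generated by the cumulative, ordered formula for $\eta_k$ under the cut constraints C4--C5, so I must verify that arbitrary target sizes $a_j$ are realizable by a valid layer profile and a valid cut while keeping all of C1--C8 simultaneously feasible (and that the client-hosted submodel $k=1$ can be made non-binding via a large client memory). Equally important is arguing rigorously that the solver's residual freedom -- the choice of grouping in $\boldsymbol x$, and, in a version that retains the pipeline, the choice of $b$ and the communication terms -- cannot circumvent the packing hardness; pinning $B=1$ and equalizing the servers is what removes these confounders, and making that neutralization airtight is the crux of the argument.
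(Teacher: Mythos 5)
Your proposal is correct in substance, and it lands on the same core insight as the paper -- the hardness of $\mathcal{P}1$ lives in the memory-constrained placement decision (C6 together with C7/C8) -- but it executes a genuinely more complete argument. The paper's proof is a two-sentence sketch: it decomposes $\mathcal{P}1$ into layer splitting, model placement, and micro-batching, asserts that with the other two fixed the placement subproblem ``is a knapsack problem,'' and concludes NP-hardness. That sketch leaves two things implicit which your proposal supplies explicitly. First, the reduction direction: saying a subproblem \emph{resembles} knapsack does not prove hardness unless one shows that hard packing instances actually \emph{arise} as instances of $\mathcal{P}1$; your faithful-encoding step (realizing arbitrary item sizes $a_j$ through the per-layer increments of $\widetilde\varphi_i,\widetilde\phi_i,\widetilde\sigma_i,\beta_i$ in $\eta_k$, with $K$ large enough that the finest split is available and the refinement argument handling coarser cuts) does exactly this work, and your monotone-refinement observation closes the loophole the paper ignores, namely that the solver could choose $\boldsymbol{x}$ adversarially. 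Second, the neutralization: fixing subproblems is only legitimate if the fixing can be forced by the instance, and your choices ($B=1$ so $b=1$ and the pipeline term $\lceil (B-b)/b\rceil\,T_i$ vanishes; identical $f_n,\kappa_n$ so the server-side sum in $T_f$ is placement-invariant; zero activation sizes $\varphi_i=\phi_i=0$ would even make the communication terms exactly zero rather than merely negligible, tightening your ``infinite bandwidth'' step) accomplish precisely that, yielding hardness of the \emph{full} problem's decision version rather than of a hand-picked subproblem. A minor point in your favor: multi-server placement with per-server capacities $M_n$ is really Bin Packing / generalized assignment rather than knapsack, so your choice of source problem is the more apt one; with $N=2$ your Partition specialization gives weak hardness and Bin Packing gives strong hardness. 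The one detail to pin down is the client-side submodel: since C4 forces $\sum_i x_{i1}=1$, submodel $1$ cannot be empty, so you should state the dummy zero-cost first layer and a large client memory $M_n$ for $n\in\mathcal{N}_c$ explicitly, as you anticipated.
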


\begin{proof}
Problem $\mathcal{P}1$ can be decomposed into
three sub-problems, namely layer splitting, model placement, and micro-batching. When fixing the other two sub-problems, the placement sub-problem  is a knapsack problem~\cite{CACCHIANI2022105693} which is NP-hard. Consequently, Problem $\mathcal{P}1$ is NP-hard.
\end{proof}

\section{Solution Approach}\label{solution}
In this section, we first reformulate Problem $\mathcal{P}1$ and analyze the computation complexity of the latency minimization problem. Then, we decompose the problem into two subproblems, i.e., the MSP problem and the micro-batching problem. Given fixed MSP results, we derive the optimal solution to the micro-batching subproblem. Moreover, we identify that the MSP problem is a problem with a combined min-max min-sum objective function, which can be efficiently solved based on graph theory. 
 
\subsection{Problem Reformulation}
The optimization objective of Problem $\mathcal{P}1$ is nonlinear and contains the ceil function. To tackle this challenge, we introduce an auxiliary variable $T_1$ subject to $ T_1\ge T_i(\boldsymbol{x}, \boldsymbol{y}, b)$ and function  $\xi(b) =\lceil ({B-b}/{b})\rceil$.
$\mathcal{P}1$ is then converted into
	\begin{align}
	\mathcal{P}2:	&\underset{
	\{\boldsymbol{x}, \boldsymbol{y},b,T_1\}}{\min}T_f(\boldsymbol{x}, \boldsymbol{y},b)+\underset{\xi(b) }{\underbrace{\lceil ({B-b}/{b})\rceil  }} \cdot T_1 \notag  \\
		\mathrm{s.t.}\quad
          & \mathrm{C1} \text{ - } \mathrm{C8}, \notag \\
          & \mathrm{C9:}\  \sum_{k=2}^K y_{kn} t_{k,n}^{\mathrm{F}}(b,\boldsymbol{x})\leqslant T_1,\  \forall  n\in \mathcal{N}\setminus \mathcal{N}_c, \notag  \\
           &\mathrm{C10:}\   t_{1,n}^{\mathrm{F}}(b,\boldsymbol{x})\leqslant T_1,\    \forall n\in \mathcal{N}_c, \notag  \\
                              &  \mathrm{C11:}\   b{\sum_{k=2}^{K-1}{y_{kn}}y_{(k+1) n^{\prime}} \sum_{i=1}^{I}x_{ik}\varphi_i}/{r_{nn^{\prime}}}\leqslant T_1, \notag \\ &\ \ \ \ \ \  \ \forall    n,\ n^\prime\in\mathcal{N}\setminus \mathcal{N}_c, \notag  \\
                              &  \mathrm{C12:}\    b_m \sum_{i=1}^{I}x_{i1}\varphi_i/{r_{nn^{\prime}}}\leqslant T_1, \notag \\
                              &\ \ \ \ \ \  \ \forall   n\in\mathcal{N}_c,\ n^\prime\in\mathcal{N}\setminus \mathcal{N}_c, \notag  \\
                               & \mathrm{C13:}\  \sum_{k=2}^K y_{kn} t_{k,n}^{\mathrm{B}}(b,\boldsymbol{x}) \leqslant T_1,\  \forall  n\in\mathcal{N}\setminus \mathcal{N}_c, \notag  \\
                               &  \mathrm{C14:}\      t_{k,n}^{\mathrm{B}}(b,\boldsymbol{x})\leqslant T_1,\  \forall  n\in\mathcal{N}_c, \notag \\
                              &  \mathrm{C15:}\   b{\sum_{k=2}^{K-1}{y_{kn}}y_{(k+1)n^{\prime}} \sum_{i=1}^{I}x_{ik}\phi_{(i+1)}}/{r_{nn^{\prime}}}\leqslant T_1, \notag \\
                        &\ \ \ \ \ \ \  \forall   n,\ n^\prime\in\mathcal{N}\setminus \mathcal{N}_c,\notag \\
                              &  \mathrm{C16:}\     b_m{ \sum_{i=1}^{I}x_{i1}\phi_{(i+1)}}/{r_{nn^{\prime}}}\leqslant T_1, \notag \\
                              &\ \ \ \ \ \ \  \forall   n\in\mathcal{N}_c,\  n^\prime\in\mathcal{N}\setminus \mathcal{N}_c, 
                    \label{problem_p2}
	\end{align}

Specifically, the additional Constraints $\mathrm{C9}\text{-}\mathrm{C16}$ are established according to Eq. (\ref{max-function}). 
 The transformed Problem $\mathcal{P}2$ is equivalent to the original Problem $\mathcal{P}1$, as the optimal solution $T_1^*$ obtained from Problem $\mathcal{P}2$ should meet $T_1^*=T_i(\boldsymbol{x}, \boldsymbol{y}, b)$.

\subsection{Complexity Analysis}
If we take the brute-force approach and examine all possible layer splitting, model placement
and micro-batching options to tackle Problem $\mathcal{P}2$, then it requires calculating the latency for all $\sum\limits_{k=2}^N B\bigg ( \begin{array}{c}
	I-1\\
	k-1 \\
\end{array} \bigg)k !\bigg( \begin{array}{c}
	N\\
	k\\
\end{array} \bigg)$ options! For example, given $B=256$, the VGG-16 model to be split and deployed on no more 5 edge nodes have approximately $5.76 \times 10^{7}$ options. Examining such a huge number of options requires significant processing time. Moreover, many existing algorithms, such as branch and bound algorithm \cite{10.5555/2969033.2969194},  may not apply to a large-scale problem since the time complexity exponentially increases with the number of the integer variables, particularly with the multi-dimension resource allocation variables $\{x_{ik},\ y_{kn},\ b, \ T_1\}$. Instead of looking for search methods, in what follows, we discuss the more efficient solution approaches for the micro-batching problem and the MSP problem, respectively.


\subsection{Solution to the Micro-batching Problem}
By  fixing the decision variables $x_{ik}$, $y_{kn}$ and $T_1$ in $\mathcal{P}2$, the micro-batching problem can be expressed as
	\begin{align}
	\mathcal{P}3:\	&\underset{\{ b  \}}{\mathrm{min}}\  T_f(b)+\xi(b)  \cdot T_1 \notag \\
		\mathrm{s.t.}\quad
         & \mathrm{C3},\ \mathrm{C7}\text{ - }\mathrm{C16}, 
                \label{pipeline_schedulingsub-problem}
	\end{align}
    Thus, the subproblem $\mathcal{P}3$ is an integer linear programming (ILP) problem. We have the following theorem.

\begin{theorem}
    For Problem $\mathcal{P}3$,  the optimal micro-batch size $b^*$ is given by
\begin{equation}
b^*=\left\{ \begin{array}{l}
b_1^*, \ 0 < b\leqslant \min\{b_{\mathrm{th}}^c,b_{\mathrm{th}}^s\},\\
	b_2^*, \ 0<\max\{b_{\mathrm{th}}^s,b_{\mathrm{th}}^c\} \leqslant b,\\
	b_3^*, \ 0<b_{\mathrm{th}}^c<b < b_{\mathrm{th}}^s,\\
	b_4^*, \ 0<  b_{\mathrm{th}}^s<  b < b_{\mathrm{th}}^c,\\
\end{array} \right. 
\label{optimal_batch}
\end{equation}
Moreover, $b_1^*$, $b_2^*$, $b_3^*$ and $b_4^*$ are derived in \eqref{b_1}, \eqref{b_2}, \eqref{b_3} and \eqref{b_4}, respectively.
  \label{proposition_for_pipe}
\end{theorem}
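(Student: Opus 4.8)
The plan is to reduce $\mathcal{P}3$ to minimizing a one–dimensional convex function within each of four regimes. The key observation is that once $\boldsymbol{x},\boldsymbol{y}$ and $T_1$ are fixed, the \emph{only} sources of nonlinearity in the objective $T_f(b)+\xi(b)\cdot T_1$ are (i) the piecewise definition of the backward-pass latency $t_{k,n}^{\mathrm{B}}(b,\boldsymbol{x})$, whose client-side and server-side branches switch at the thresholds $b_{\mathrm{th}}^c$ and $b_{\mathrm{th}}^s$, and (ii) the ceiling term $\xi(b)=\lceil (B-b)/b\rceil$. Every forward-pass and communication latency is affine in $b$. Hence the four cases in \eqref{optimal_batch} are precisely the four combinations in which the client and server backward pass sit in either their constant branch or their linear branch: both constant on $(0,\min\{b_{\mathrm{th}}^c,b_{\mathrm{th}}^s\}]$, both linear on $[\max\{b_{\mathrm{th}}^s,b_{\mathrm{th}}^c\},B]$, and the two ``mixed'' regimes $(b_{\mathrm{th}}^c,b_{\mathrm{th}}^s)$ and $(b_{\mathrm{th}}^s,b_{\mathrm{th}}^c)$.

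First I would show that on each regime $\mathcal{I}_j$ the first-micro-batch latency collapses to an affine function $T_f(b)=\alpha_j b+\beta_j$ with a regime-dependent slope $\alpha_j>0$. Substituting the active branch of $t_{k,n}^{\mathrm{B}}$ into \eqref{Tf}, the inner sums over submodels and links are affine by construction, and each $\max$ over $\mathcal{N}_c$ is attained by a fixed (slowest) client, because the allocation rule makes every $b_m$ scale monotonically with $b$ so the ordering of the client terms does not change within a regime. Thus each $\max$ reduces to a single affine term, and $\alpha_j$ is just the sum of the per-sample coefficients $\kappa_n\delta_{kn}^{\mathrm{F}}/f_n$, $\kappa_n\delta_{kn}^{\mathrm{B}}/f_n$ and the transmission coefficients $\sum_i x_{ik}\varphi_i/r_{nn'}$, $\sum_i x_{ik}\phi_{(i+1)}/r_{nn'}$ that are active in regime $j$.

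Next, writing $\xi(b)=\lceil B/b\rceil-1$, the objective on $\mathcal{I}_j$ becomes $g_j(b)=\alpha_j b+(\lceil B/b\rceil-1)T_1+\beta_j$. I would relax integrality and the ceiling to obtain the continuous surrogate $\tilde g_j(b)=\alpha_j b+\frac{BT_1}{b}+(\beta_j-T_1)$, which is strictly convex on $b>0$ since $\alpha_j>0$ and $BT_1>0$. Its unconstrained minimizer is the classical $\sqrt{BT_1/\alpha_j}$, so the candidate $b_j^*$ is this value projected onto the intersection of $\mathcal{I}_j$ with the feasible box imposed by C3 together with the memory/latency constraints C7--C16, and then rounded to an integer; this yields the closed forms of \eqref{b_1}--\eqref{b_4}.

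The main obstacle is the rigorous treatment of the ceiling together with integrality, since $\xi(b)$ is a nonincreasing staircase and the true optimum of $g_j$ may sit at a plateau boundary rather than at the rounded minimizer of $\tilde g_j$. I would close this by noting that because $\alpha_j>0$, on any maximal interval where $\lceil B/b\rceil$ is constant the objective $g_j$ is strictly increasing, hence minimized at the left endpoint of that plateau; combined with the convexity of $\tilde g_j$, the optimal integer $b$ within $\mathcal{I}_j$ is attained among the finite set $\{\lfloor\sqrt{BT_1/\alpha_j}\rfloor,\ \lceil\sqrt{BT_1/\alpha_j}\rceil\}$ and the adjacent plateau endpoints, all clipped to $\mathcal{I}_j$. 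Evaluating $g_j$ on this finite candidate set gives $b_j^*$, and comparing the four regime-optima delivers the global minimizer $b^*$ of \eqref{optimal_batch}.
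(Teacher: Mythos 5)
Your proposal follows essentially the same route as the paper's Appendix A: the same four-regime decomposition at the thresholds $b_{\mathrm{th}}^c$ and $b_{\mathrm{th}}^s$ induced by the piecewise backward-pass latency, the same observation that $T_f(b)$ becomes affine in $b$ within each regime, the same continuous surrogate $\alpha_j b + BT_1/b$ with closed-form minimizer $\sqrt{BT_1/\alpha_j}$, and the same projection onto the feasible interval $[1,\min\{b_v,B\}]$ cut out by C3 and C7--C16 (the paper's $b_v$, ${b_v}^{\prime}$, ${b_v}^{\prime\prime}$, ${b_v}^{\prime\prime\prime}$), followed by integer rounding to recover \eqref{b_1}--\eqref{b_4}. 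Where you genuinely diverge --- and in fact strengthen the argument --- is the treatment of the staircase term $\xi(b)$: the paper evaluates only the two rounded candidates $b_j \in \{\lfloor\widetilde b_j\rfloor,\lceil\widetilde b_j\rceil\}$, implicitly assuming this suffices, whereas you note that $g_j$ is strictly increasing on each plateau of $\lceil B/b\rceil$ and therefore enlarge the candidate set with plateau left endpoints. This is not a cosmetic refinement: taking $B=100$, $T_1=1$, $\alpha_j=0.03$ gives $\widetilde b_j\approx 57.7$ with $g_j(57)\approx 2.71$ and $g_j(58)\approx 2.74$, while the left endpoint $b=50$ of the plateau containing $\widetilde b_j$ achieves $g_j(50)=2.5$; the paper's two-point rounding misses this optimum and your candidate set catches it. For full rigor you would still want a short bound on how far the optimal plateau index can drift from $\sqrt{\alpha_j B/T_1}$, since ``adjacent plateau endpoints'' is stated informally, but a one-line estimate via $\alpha_j\lceil B/m\rceil \leqslant \alpha_j(B/m+1)$ closes this. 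One smaller caveat: your claim that each $\max$ over $\mathcal{N}_c$ is attained by a fixed client throughout a regime relies on the paper's setup in which the intercepts $t_0^c,t_1^c$ are common across clients and every $b_m$ is (up to the remainder given to client $M$) proportional to $b$; without that, the max of affine terms is only piecewise affine, and the paper makes the same tacit assumption by keeping $\max_{n\in\mathcal{N}_c}$ inside the coefficients of \eqref{b_1}--\eqref{b_4}.
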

\begin{proof}
Please see Appendix A.
\end{proof}

\subsection{Solution to the MSP Problem}

\begin{figure}[t!]
\centering
\includegraphics[width=8cm]{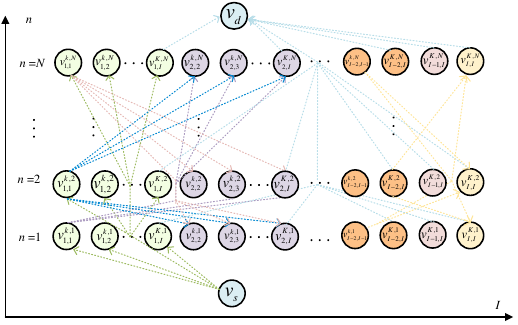}
\caption{The graph representation of the MSP problem.}
\label{original-graph}
\end{figure}

By fixing the micro-batch $b$, the MSP problem can be formulated as
	\begin{align}
	\mathcal{P}4:	&\underset{\{ \boldsymbol{x}, \boldsymbol{y}, T_1 \}}{\mathrm{min}} \quad T_f(\boldsymbol{x}, \boldsymbol{y}) +\xi  \cdot T_1\notag \\
          \mathrm{s.t.}\quad
          & \mathrm{C1},\ \mathrm{C2}, \ \mathrm{C4}\text{ - } \mathrm{C16},
    \label{problem_mapping}
	\end{align}
which is a Mixed Integer Nonlinear Programming (MINLP) problem. Fortunately, we find that it falls into a combinatorial optimization problem (COP) with a combined min-max and min-sum objective function \cite{minoux1989solving}. To tackle this problem, we convert the search space for the split and placement decisions into an undirected graph $\mathbf{G}=\left( \mathbf{V},\mathbf{E} \right) $, where the set of vertices can be expressed by
\begin{equation}
\begin{aligned}
    \mathbf{V}=\{ v_{(i-m),i}^{k,n}|\ \forall &k \in  [1,K],\ n \in  \mathcal{N},\\
    & \ i \in [1,I],\ m \in [0, i-1] \},
    \end{aligned}
\end{equation}
where a single vertex $v_{(i-m),i}^{k,n}$ represents the decision of allocating the $(i-m)$-th layer to the $i$-th layer to the $k$-th submodel on the $n$-th edge node with all clients grouped into one virtual node for $k=1$ and servers for $k\in[2,K]$. Here, $m \in [0,i-1]$. Besides, the set of vertices $\mathbf{V}$ comprise all possible assignment decisions. Then, we can easily connect the vertex following Constraints $\mathrm{C1},\ \mathrm{C2} $ and $\mathrm{C4}\text{-}\mathrm{C16}$
\begin{equation}
\begin{aligned}
\mathbf{E}=\big\{ &\big( v_{(i-m),i}^{k,n},v_{(i+1),(i+1+m^\prime)}^{(k+1),n^\prime} \big) |\ \forall k \in [1,K-1],\\
&\ n, \ n^\prime \in \mathcal{N}, \ n \neq n^\prime,  \ i \in [1,I],\\
&\ m \in [0, i-1],\ m^\prime \in [0, I-i-1]\big\}, 
    \end{aligned}
\end{equation}
where an edge $\big( v_{(i-m),i}^{k,n},v_{(i+1),(i+1+m^\prime)}^{(k+1),n^\prime} \big)$ represents choosing assignment $v_{(i+1),(i+1+m^\prime)}^{k^\prime,n^\prime} $ after assignment $v_{(i-m),i}^{k,n}$. 
We set the weight of the edge to be the sum of latency of transferring data between the submodel $k$ and $(k+1)$ and the computing latency on client/server $n$, which can be expressed as
\begin{equation}
\begin{aligned}
&c\left( v_{(i-m),i}^{k,n},v_{(i+1),(i+1+m^{\prime})}^{(k+1),n^{\prime}} \right) =
\\
&\left\{ \begin{array}{l}
	\max\limits_{n\in \mathcal{N}_c } \{t_{1,n}^{\mathrm{F}}(b,\boldsymbol{x})+\sum\limits_{n^{\prime}=1}^N{y_{2n^{\prime}}}t_{1,nn^{\prime}}^{\mathrm{F}}(\boldsymbol{x},b)\}+\max\limits_{n\in \mathcal{N}_c } \{\\
	\quad \quad  \quad    \quad \quad      \    t_{1,n}^{\mathrm{B}}(b,\boldsymbol{x})+\sum\limits_{n^{\prime}=1}^N{y_{2n^{\prime}}}t_{1,n^{\prime}n}^{\mathrm{B}}(\boldsymbol{x},b)\},\ k=1,\\
	t_{k,nn^{\prime}}^{\mathrm{F}}(\boldsymbol{x},b)+t_{k,n^{\prime}n}^{\mathrm{B}}(\boldsymbol{x},b)+t_{k,n}^{\mathrm{F}}(b,\boldsymbol{x})+t_{k,n}^{\mathrm{B}}(b,\boldsymbol{x}),\\
   \quad \quad \quad  \quad \quad \quad  \quad \quad \quad  \quad \quad \quad \quad \quad  \quad \quad \
 \ \forall k\in [2,K-1],\\
\end{array} \right. 
    \end{aligned}
\end{equation}
where computing latency is set to be the maximum computing latency of all clients if $k=1$.
Furthermore, we add the virtual source vertex $v_s$ and destination vertex $v_d$, respectively, where the weights of edges between $v_s$ or $v_d$ and any edge node in the graph are set to zero, i.e., $c\left( v_s,v_{1,(1+m^\prime)}^{1,n^\prime} \right)=0$ and $c\left( v_{(I-m),I}^{K,n},v_d \right)=0$.

To solve the MSP problem with the combined min-max and min-sum objective, we propose an efficient algorithm to find the optimal solution via the defined graph. 
Then, we sort the cost of edges in descending order to construct subgraphs. Given an edge $e\in \mathbf{E}$, we generate a subgraph that each path from the source vertex to the destination vertex in the subgraph should contain $e$. The process is repeated until we go through all the edges.
With each subgraph, we first use reformulation-linearization technique
to handle the original min-sum part in Problem $\mathcal{P}4$ which will be elaborated upon in Appendix B  and calculate a lower bound for the objective to decide whether to search the subgraph or discard the graph.
Once searching, the original MSP problem in $\mathcal{P}4$ becomes finding the shortest path from $v_s$ to $v_d$, and the vertices along this path form the layer splitting and model placement decisions $\{x_{ik},\ y_{kn}\}$, i.e., $x_{ik}=1$ and $y_{kn}=1$ if and only if $v_{(i-m),i}^{k,n}$ is on the path. By transforming the COP Problem $\mathcal{P}4$ into a graph, the heap-optimized Dijkstra's algorithm~\cite{haeupler2024universaloptimalitydijkstrabeyondworstcase} is used to find the shortest path within each generated subgraph. The minimum bound of the total latency can be found by comparing the  min-sum-min-max objectives by assuming $e$ is the bottleneck link  across all the subgraphs. For this reason, we call the process as bottleneck-aware shortest path algorithm, and the detailed procedure is summarized in \textbf{Algorithm \ref{alg_1}}. We have the following results.

\begin{algorithm}[t] 
	\caption{Bottleneck-aware Shortest Path Algorithm for the MSP Problem.}
 \label{alg_1}
	\LinesNumbered 
	\KwIn{Graph $G = (V, E)$, the maximum GPU memory of the client or server $M_n$, the number of edge servers $N$, the neural network to be processed, micro-batch size $b$,  mini-batch size $B$, channel-related parameters $W_{nn^\prime}$, $N_0$,  $r_{nn^{\prime}}$ and $\gamma$, server-related parameters  $p_n$, $t_0^s$, $t_0^c$, $t_1^s$, $t_1^c$, $f_n$,  $\kappa_n$,  ${L_{t}}^*=+\infty $.
}
	\KwOut{The optimal training latency $L_t(\boldsymbol{x}, \boldsymbol{y}, T_1)$ per round, splitting decisions $\{x^*_{ik}|i\in[1,I],k\in[1,K]\}$, and placement decisions $\{y^*_{kn}|k\in[1,K],n\in[1,N] \}$.}
     Calculate the weights $c\left( v_{(i-m),i}^{k,n},v_{(i+1),(i+1+m^\prime)}^{(k+1),n^\prime} \right)$ of all edges\;
 Sort edges $e$ in descending order of their weight $w(e)$, resulting in set $\mathbf{E}_w$\;
   Find the lower bound $l_b$ for the orginal graph with LP solver by reformulating and linearizing the min-sum part in Problem $\mathcal{P}4$\;
 \For{ \textnormal{$e \in \mathbf{E}_w$}}{
  Generates a subgraph from $G$ containing edge $e$ where the cost of $e$ is the maximum one in the subgraph\;
        \If{ $l_b+\xi \cdot w(e)> {L_{t}}^*$}{
    Continue; 
}
  Find an optimal path in the subgraph that includes edge $e$ using the heap-optimized Dijkstra's algorithm satisfying the min-sum function $\varpi_h = {\mathrm{min}}\ T_f(\boldsymbol{x}, \boldsymbol{y}) $ and the memory constraint\;
  Compute the min-sum-min-max objective  ${L_{t}}^h=\varpi_h+ \xi \cdot w(e)$\;
   \If{${L_{t}}^h <{L_{t}}^*$}{Update the training latency ${L_{t}}^* \gets {L_{t}}^h$\;
   $x^*_{ik} \gets x_{ik} \ \text{and}\ y^*_{kn} \gets y_{kn}$\;}
   Exclude $e$ from Graph $G$;
   }
\end{algorithm}

\textit{\textbf{Finding lower bounds and pruning:}} 
To efficiently reduce the search space of bottleneck-aware shortest path algorithm for the MSP problem, we introduce the reformulation-linearization approach \cite{FRIEZE198389} to compute a lower bound for the objective (see Appendix B). 
Specifically, the bound is achieved by relaxing quadratic constraints of the original problem. 
The relaxed problem is then solved as an  linear programming (LP) problem using LP solver Gurobi \cite{gurobi},
which computes the lower bound for min-sum function in the MSP problem.
When repeating searching subgraphs, if the lower bound of min-sum part plus the cost of current edge   exceeds the current best solution ${L_{t}}^*$, the subsequent subgraphs  can be safely excluded from further exploration.
In other words, subproblems that cannot achieve a better solution than the current best solution 
${L_{t}}^*$
  can be terminated early, saving computation time
 (see \textbf{Algorithm \ref{alg_1}}).

\begin{theorem}
\textbf{Algorithm \ref{alg_1}} can obtain the optimal solution to the MSP subproblem $\mathcal{P}4$.
\end{theorem}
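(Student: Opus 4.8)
The plan is to establish optimality in two stages: first fix the reduction of $\mathcal{P}4$ to a path problem on $\mathbf{G}$, and then show that the descending-order bottleneck enumeration solves that path problem exactly, in the spirit of the combined min-sum/min-max framework of \cite{minoux1989solving}. For the reduction I would verify that every assignment $(\boldsymbol{x},\boldsymbol{y},T_1)$ satisfying $\mathrm{C1},\mathrm{C2},\mathrm{C4}$--$\mathrm{C16}$ corresponds to a memory-feasible $v_s$--$v_d$ path $P$ in $\mathbf{G}$, in such a way that the linear term equals the path length $T_f=\sum_{e\in P}w(e)$ and the tightest feasible bottleneck satisfies $T_1=\max_{e\in P}w(e)$, i.e. $T_1$ is the weight of the heaviest edge on $P$. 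Under this correspondence the objective of any feasible point becomes $\ell(P)+\xi\max_{e\in P}w(e)$, so solving $\mathcal{P}4$ is equivalent to minimizing this combined cost over all memory-feasible paths.

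The central observation is that the bottleneck value of an optimal path is always attained by some edge, so only the $|\mathbf{E}|$ edge weights are candidate bottleneck levels. I would then use the fact that Algorithm~\ref{alg_1} sorts edges in descending weight and deletes each edge once it has been handled: at the iteration for $e$ the working graph therefore contains only edges of weight at most $w(e)$. Constraining the Dijkstra search to route through $e$ in this working graph makes $e$ the heaviest edge of the returned path $P_e$, so $P_e$ has bottleneck exactly $w(e)$ and its true objective coincides with the recorded value $L_t^{h}=\varpi_h+\xi\,w(e)$, where $\varpi_h$ is the minimum length over memory-feasible paths through $e$ that use only edges no heavier than $e$. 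Since all weights are nonnegative, the heap-optimized Dijkstra step returns this minimum correctly, and computing the shortest path through a fixed edge $e=(u,v)$ reduces to the shortest $v_s$-to-$u$ distance plus $w(e)$ plus the shortest $v$-to-$v_d$ distance.

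Optimality then follows from a two-sided comparison with $L_t^{\mathrm{opt}}$. Each candidate $P_e$ is itself a memory-feasible solution whose exact cost equals $L_t^{h}$, so every recorded value is at least $L_t^{\mathrm{opt}}$ and hence the incumbent obeys $L_t^{*}\ge L_t^{\mathrm{opt}}$. Conversely, let $P^{*}$ be an optimal path and let $e^{*}$ be a heaviest edge of $P^{*}$, chosen as the first such edge in processing order so that no edge of $P^{*}$ has been deleted when $e^{*}$ is handled; then $T_1^{*}=w(e^{*})$ and the working graph at that iteration still contains every edge of $P^{*}$. Consequently $P^{*}$ is admissible for the constrained search, giving $\varpi_{e^{*}}\le T_f^{*}$ and a candidate $\varpi_{e^{*}}+\xi\,w(e^{*})\le T_f^{*}+\xi\,T_1^{*}=L_t^{\mathrm{opt}}$. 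The two inequalities force $L_t^{*}=L_t^{\mathrm{opt}}$, i.e. the algorithm is exact.

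It remains to check that the pruning test is loss-free and to confront the hardest step. The Appendix~B relaxation furnishes a valid lower bound $l_b\le\min T_f\le T_f^{*}$ on the min-sum part, so $l_b+\xi\,w(e)$ under-estimates the cost of every bottleneck-$e$ path; discarding $e$ when $l_b+\xi\,w(e)>L_t^{*}$ can thus only eliminate edges that provably cannot beat the incumbent, and in particular $l_b+\xi\,w(e^{*})\le T_f^{*}+\xi\,w(e^{*})=L_t^{\mathrm{opt}}\le L_t^{*}$ guarantees that $e^{*}$ is never pruned. The step I expect to be the main obstacle is the memory-constrained shortest path: constraint $\mathrm{C8}$ aggregates memory over all submodels co-located on a server, which is not a per-vertex quantity, so I would need to argue that the graph is laid out (e.g. so that each submodel occupies a distinct node) so that this coupling collapses to a per-vertex feasibility test compatible with Dijkstra. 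Securing this, together with the soundness of the linearized bound in Appendix~B and the identification of the bottleneck $T_1$ with the heaviest edge weight under $\mathrm{C9}$--$\mathrm{C16}$, is the crux of the argument.
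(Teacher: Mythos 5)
Your skeleton is precisely the Minoux-style threshold argument that the paper itself gestures at -- the paper's ``proof'' is a one-line citation to the min-sum/min-max framework of \cite{minoux1989solving} -- and for the abstract path problem $\min_P\,\ell(P)+\xi\max_{e\in P}w(e)$ your argument is correct and complete: descending-order processing with deletion of handled edges so that the working graph contains only edges of weight at most $w(e)$, the tie-breaking choice of $e^*$ as the \emph{first} heaviest edge of $P^*$ in processing order, the two-sided comparison $L_t^*\ge L_t^{\mathrm{opt}}$ and $\varpi_{e^*}+\xi\,w(e^*)\le L_t^{\mathrm{opt}}$, and the loss-freeness of the pruning test via the LP bound $l_b$ (including the check that $e^*$ is never pruned). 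This is a faithful and considerably more explicit reconstruction of what the paper omits.

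However, the two items you honestly flag as ``the crux'' and defer are genuine gaps, and neither is resolved by the paper either. First, the identification $T_1=\max_{e\in P}w(e)$ does not follow from the formulation: constraints $\mathrm{C9}$--$\mathrm{C16}$ (equivalently Eq.~(\ref{max-function})) force $T_1$ to dominate each per-stage term $t_{k,n}^{\mathrm{F}}$, $t_{k,n}^{\mathrm{B}}$, $t_{k,nn'}^{\mathrm{F}}$, $t_{k,n'n}^{\mathrm{B}}$ \emph{individually}, whereas the edge weight $c(\cdot,\cdot)$ is the \emph{sum} of those four terms for stage $k$. Hence the tight bottleneck of a path is the maximum over edge \emph{components}, not over edge weights, and your claim that the returned path's ``true objective coincides with the recorded value $L_t^h=\varpi_h+\xi\,w(e)$'' fails: $L_t^h$ overestimates the objective of $P_e$, and the candidate bottleneck levels should be the component values rather than the $|\mathbf{E}|$ summed weights (or else $T_i$ must be redefined as a per-stage sum, which is a different pipeline model). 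Second, $\mathrm{C8}$ aggregates memory over \emph{all} submodels co-located on a server, and since the edge set only forbids consecutive submodels from sharing a node ($n\neq n'$), a path may revisit a server at non-adjacent stages; memory feasibility is then a path-coupled resource constraint that plain Dijkstra cannot certify (resource-constrained shortest path is NP-hard in general), and your proposed repair -- forcing each submodel onto a distinct node -- is itself a no-revisit path constraint that Dijkstra does not enforce. Until these two reductions are actually established rather than assumed, the optimality claim is proved only for the idealized path abstraction, not for $\mathcal{P}4$ as written.
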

\begin{proof} 
The MSP Problem $\mathcal{P}4$ is a min-sum-min-max problem, which can be optimally solved by our bottleneck-aware shortest path algorithm. The proof is omitted due to page limitation, and a similar proof can be found in \cite{minoux1989solving}.
\end{proof}

\begin{theorem}
\textbf{Algorithm \ref{alg_1}} has the computational complexity  is $O(E\log E + E(V+E)\log V) = O(E(V+E)\log V)$.
\end{theorem}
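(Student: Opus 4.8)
The plan is to bound the cost of each line of \textbf{Algorithm \ref{alg_1}} separately and then show that the main loop dominates. Write $V=|\mathbf{V}|$ and $E=|\mathbf{E}|$. First I would dispatch the two pre-loop operations. Computing the edge weights $c(\cdot,\cdot)$ in line 1 is $O(1)$ per edge once the per-node and per-link latencies $t_{k,n}^{\mathrm{F}}$, $t_{k,n}^{\mathrm{B}}$, $t_{k,nn^{\prime}}^{\mathrm{F}}$, $t_{k,n^{\prime}n}^{\mathrm{B}}$ are precomputed, hence $O(E)$ overall; sorting the edges into $\mathbf{E}_w$ in descending order (line 2) costs $O(E\log E)$ by any comparison sort. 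The one-time lower-bound computation in line 3 is a single LP solve on the linearized min-sum relaxation of Appendix B; I would treat it as a fixed preprocessing cost that is executed once, outside the loop, and therefore does not enter the per-edge multiplicative term (this is the one place where the reported bound implicitly subsumes the LP solve, which I flag below).

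Next I would bound the body of the \texttt{for} loop, which executes exactly $|\mathbf{E}_w|=E$ times. Within one iteration: (i) assembling the subgraph whose heaviest edge is $e$ amounts to restricting $G$ to the edges not yet excluded, which can be built---or maintained incrementally, since edges leave $G$ in the sorted order---in $O(V+E)$; (ii) the pruning test $l_b+\xi\cdot w(e)>L_t^*$ is $O(1)$; (iii) the shortest-path search uses the heap-optimized Dijkstra algorithm on a graph with at most $V$ vertices and $E$ edges, which costs $O((V+E)\log V)$ with a binary heap; (iv) evaluating $L_t^h=\varpi_h+\xi\cdot w(e)$, comparing it against $L_t^*$, copying the incumbent decisions $\{x_{ik},y_{kn}\}$ along the returned path, and excluding $e$ are each $O(V)$ or $O(1)$. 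The Dijkstra call in step (iii) dominates, so a single iteration is $O((V+E)\log V)$ and the entire loop is $O\big(E(V+E)\log V\big)$.

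Combining the pieces yields $O(E)+O(E\log E)+O\big(E(V+E)\log V\big)=O\big(E\log E+E(V+E)\log V\big)$, which is the first expression in the statement. To obtain the collapsed form I would use the fact that the vertex construction gives a graph with $E=O(V^2)$, so $\log E\le 2\log V+O(1)=O(\log V)$; hence $E\log E=O(E\log V)$, and since $E\log V\le E(V+E)\log V$, the sorting term is absorbed into the loop term, leaving $O\big(E(V+E)\log V\big)$.

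The main obstacle, and the only genuinely delicate point, is justifying that the per-iteration cost is truly $O((V+E)\log V)$ and not larger. Two things must be checked: that rebuilding (or incrementally updating) the subgraph and reading off the optimal path stay within $O(V+E)$, so they cannot inflate the per-edge factor; and that ``heap-optimized Dijkstra'' denotes the binary-heap variant, for which $O((V+E)\log V)$ is exactly the textbook bound (a Fibonacci-heap implementation would change the constant term but not the reported form). A secondary caveat is line 3: the stated complexity silently excludes the cost of the Gurobi LP solve, treating it as preprocessing; I would make this assumption explicit, noting that because the relaxation is solved once rather than inside the loop, it does not affect the asymptotic per-edge scaling that the theorem reports.
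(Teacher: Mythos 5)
Your proof is correct and follows essentially the same route as the paper's: sort the edges in $O(E\log E)$, run heap-based Dijkstra at $O((V+E)\log V)$ per iteration over all $E$ subgraphs, and combine to get $O(E\log E + E(V+E)\log V)=O(E(V+E)\log V)$; the only cosmetic difference is that the paper realizes the ``path through $e$'' search as two Dijkstra runs on the subgraph pieces before and after $e$, with the same asymptotic cost. You in fact supply two details the paper glosses over --- the justification that $E\log E$ is absorbed via $E=O(V^2)$ so $\log E=O(\log V)$, and the explicit caveat that the one-time LP lower-bound solve is treated as preprocessing outside the stated bound.
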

\begin{proof} 
 We sort the edges of a graph according to their respective costs with the complexity of $O(E\log E)$. By checking every edge to generate a subgraph containing the current edge, the complexity is $O(E)$. Then, the total complexity of subgraph generation is $O(E\log E)$.
Running the heap-optimized Dijkstra’s algorithm on each subgraph has a complexity of $O((V^\prime+E^\prime)\log V^\prime+(V^{\prime \prime}+E^{\prime \prime} )\log V^{\prime\prime})$,
where $V^\prime$ and $E^\prime$ are the number of vertices and edges in the subgraph before edge $e$, and $V^{\prime \prime}$ and $E^{\prime \prime}$ are the number of vertices and edges in the subgraph after edge $e$.
In the worst case, the subgraph size is similar to the original graph, which can be approximated as $O((V+E)\log V)$.
Thus, the total complexity of the approach is
$O(E\log E + E(V+E)\log V) = O(E(V+E)\log V)$.
\end{proof}

 \begin{algorithm}[t] 
	\caption{BCD-based Splitting, Placement, and Micro-batching Algorithm.}
   \label{total-alg}
	\LinesNumbered 
	\KwIn{Convergence tolerance $\vartheta$, iteration index $\tau=0$, maximum GPU memory of the $n$-th server $M_n^s$.
}
	\KwOut{$b^*$, $\bm X^*$, $\bm Y^*$.}
 Initialization:  $b^{0}$\;
		\While{$|L_t(x_{ik}^\tau, y_{kn}^\tau, b^\tau) - L_t(x_{ik}^{\tau-1}, y_{kn}^{\tau-1}, b^{\tau-1})| \geq \vartheta $}{
            $ \tau \gets \tau +1$\;
   			Obtain $x_{ik}^{\tau}$, $y_{kn}^{\tau}$ and $T_1^{\tau}$ by \textbf{Algorithm \ref{alg_1}} with fixed decision variables $b^{\tau-1}$\;
      Obtain $b^\tau$ based on (\ref{optimal_batch}) with fixed decision variables $x_{ik}^{\tau}$, $y_{kn}^{\tau}$ and $T_1^{\tau}$\;
$b^* = b^\tau$, $\bm X^* = \{x_{ik}^{\tau}\}$, and $\bm Y^*=\{y_{kn}^{\tau}\}$.   
		}
  \label{alg_2}
\end{algorithm}

\subsection{Splitting, Placement, and Micro-batching Design}
As mentioned, we decompose the original MINLP Problem $\mathcal{P}2$ into an MSP subproblem and a micro-batching subproblem (i.e., $\mathcal{P}3$ and $\mathcal{P}4$), and find the optimal solution to each subproblem. Then, we further propose a block-coordinate descent (BCD)-based algorithm~\cite{tseng2001convergence} to solve the original Problem $\mathcal{P}2$, which is detailed in~\textbf{Algorithm~\ref{total-alg}}, where $x_{ik}^{\tau}$, $y_{kn}^{\tau}$, $T_1^{\tau}$ and  $b^\tau$ represent $x_{ik}$, $y_{kn}$, $T_1$,  and $b$  at the $\tau$-th iteration.
Despite the mixed-integer nature~\cite{grippo2000convergence} of the original Problem $\mathcal{P}2$, the BCD-based algorithm converges within only a few iterations. Note that by optimally solving the original variables in each iteration,
the convergence of the BCD procedure can be guaranteed because it finds the optimal solution at each step.
Our simulations in Section \ref{simulations} will further demonstrate that the proposed algorithm can achieve the near-optimal solution to the joint optimization problem.

\section{Simulations}\label{simulations}
\subsection{Simulation Setup}
\textcolor{blue}{
\begin{table}[t]\label{table_3}
  \centering
  \caption{Simulation Parameters.}
  \renewcommand{\arraystretch}{1.25}{
  \setlength{\tabcolsep}{0.5mm}{
\begin{tabular}{|c|c|c|c|}
\hline
\textbf{Parameter}          & \textbf{Value} & \textbf{Parameter} & \textbf{Value}  \\ \hline
$f_n$             & $1\sim10  $ TFLOPS              & $I$                 & $16 $                         \\ \hline
$B$             & 512              & $M_n$          & $2\sim16$ GB                   \\ \hline
$W_{nn^{\prime}}$               & $10 \sim 200$ MHz           & $\vartheta$                  & 0.01                       \\ \hline
$N_0$        & $-174$ dBm/Hz            & $b^0$             & 20                       \\ \hline
$\kappa _{n}$        & $\frac{1}{{32}}$ FLOPs/byte            & $N$             & $2\sim10$                     \\ \hline
$d_{nn^{\prime}}$        & $1\sim500$ m            &$t_0^c$/$t_0^s$            &$0.001$ s                        \\ \hline
$\gamma $       & $3.5 $            &$t_1^c$/$t_1^s$            &$ 0.001 $ s                       \\ \hline
$p_n$        & $100\sim500$ mW            &[$b_{\text{th}}^c$, $b_{\text{th}}^s$]            &[$32$, $32$]                         \\ \hline
\end{tabular}}}
\label{parameter_set}
\vspace{-0.8em}
\end{table}
}

In our simulations, $N$ edge servers are deployed within a square area of 0.5 km $\times$ 0.5 km where the distance between any two servers is within 500 m.
By default, $N$ is set to 6. The computing capability $f_n$ is  distributed within $[1, 10]$ TFLOPS \cite{10415235}. We consider two cases of 5G Integrated Access Backhaul (IAB) scenarios to evaluate high-speed and low-speed communication networks. For the low-speed case, we consider 5G sub-6GHz by setting the available bandwidth per link from 10 MHz to 50 MHz for access and backhaul links; For high-speed case, we consider 5G mmWave with available bandwidth per link ranging from 100 MHz to 200 MHz. Besides, the noise spectral density of servers is set to $-174$ dBm/Hz~\cite{hu2019edge}. Moreover, we consider the signal power $p_n$ from 100 to 500 mW, and the computing intensity ${\kappa _n} ={1}/{{32}}$ FLOPs/byte. The path loss exponent is set to 3.5 according to the channel model in~\cite{samimi2015probabilistic}. For readers' convenience, the detailed simulation parameters are summarized in Table \ref{parameter_set}.
We adopt the image classification datasets MNIST and CIFAR-10~\cite{lecun1998mnist} to evaluate the performance of the proposed pipelined SL for the model VGG-16~\cite{simonyan2014very} under both IID and non-IID settings. 
 Moreover, the mini-batch size $B$ is set to 512, with the initial micro batch $b$ set to 20 and [$b_{\text{th}}^c$, $b_{\text{th}}^s$] set to [32, 32]. Accordingly, based on the experiments in \cite{9252924}, $t_0^c$, $t_0^s$, $ t_1^c$ and $t_1^s$ are set to 0.001 s to match $b_{\text{th}}^c$ and  $b_{\text{th}}^s$.
 The algorithms have been implemented on a computer equipped with an AMD Ryzen Threadripper PRO 5975WX and NVIDIA GeForce RTX 4090.

\begin{figure}[t!]
\centering\subfigure[\centering CIFAR-10 with IID settings.] {
		\centering\includegraphics[width=3.85cm]{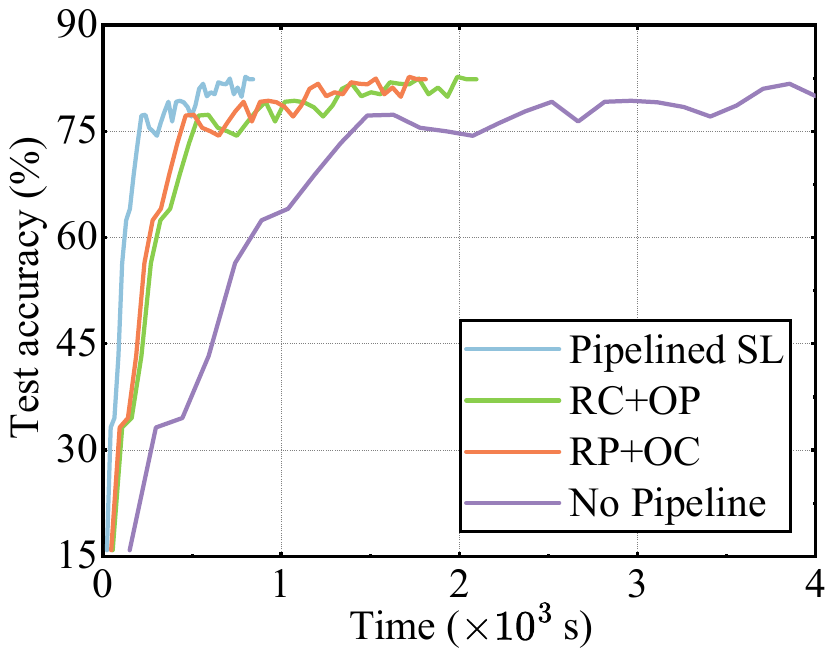}	
	} 
         \subfigure[\centering CIFAR-10 with non-IID settings.] {
		\centering\includegraphics[width=3.85cm]{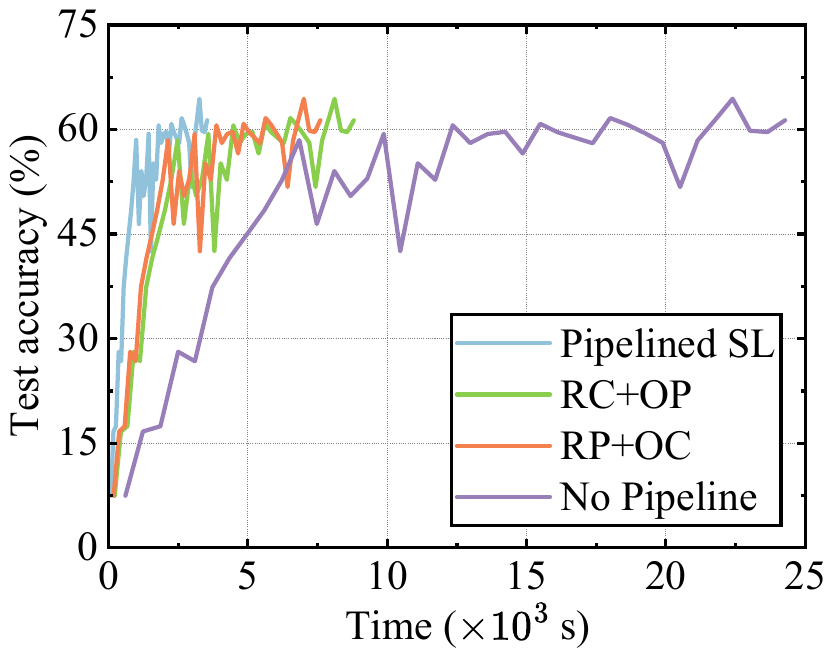}
	}
 
 \centering\subfigure[\centering MNIST with IID settings.] {
		\centering\includegraphics[width=3.85cm]{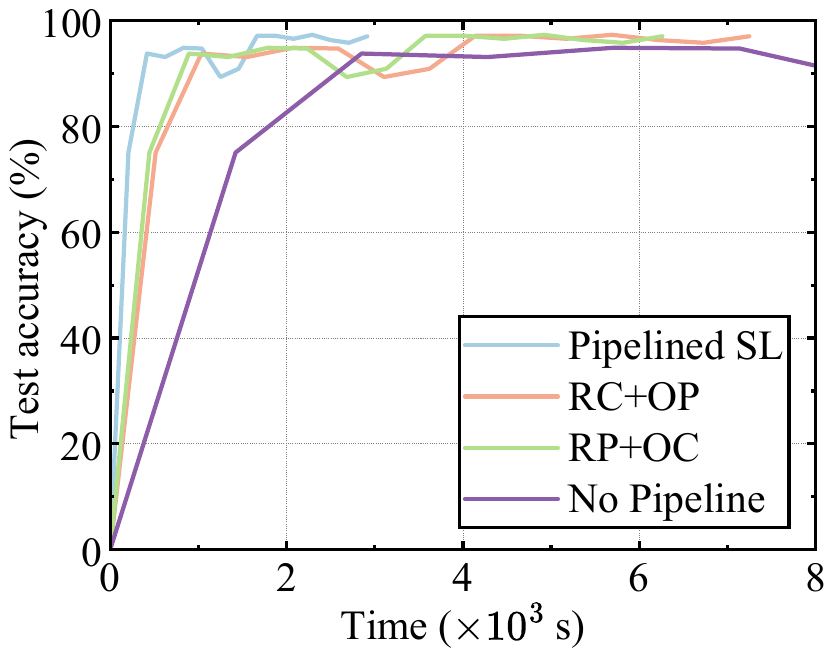}
	} 
         \subfigure[\centering MNIST with non-IID settings.] {
 \centering
		\centering\includegraphics[width=3.85cm]{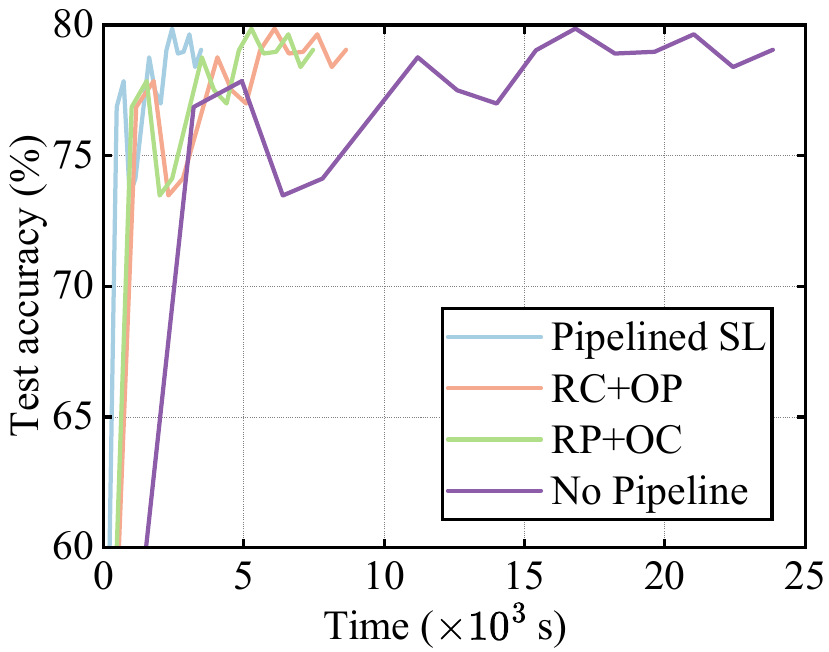}
        }
	\centering\caption{The test accuracy of different SL schemes on CIFAR-10 and MNIST datasets using VGG-16.} 
		\label{Training_curves}
		\vspace{-0.5em}
\end{figure}

To demonstrate the advantages of our framework, we compare it with three benchmarks: 1) ``Random Cut and Optimal Placement'' (\textbf{RC + OP}) randomly partition the model into several submodels while adopting the proposed placement strategy. 2) ``Random Placement and Optimal Cut'' (\textbf{RP + OC}) employs the proposed model splitting strategy while randomly placing the submodels among edge servers. 3) \textbf{No Pipeline} employs our algorithm to optimize both model splitting and placement without dividing a mini-batch into multiple micro-batches for pipelining. Due to the optimality, ``No Pipeline'' also serves as the performance upper bound of directly applying the existing split inference/learning scheme without pipeline parallelism~\cite{9562523}. In many experiments, we defined total latency as the time required for the model to reach 75\% test accuracy on CIFAR-10 under  IID setting.

\subsection{Performance Evaluation of the Proposed Pipelined SL Framework}
Fig. \ref{Training_curves} evaluates the test accuracy of our proposed pipelined SL scheme and other benchmarks across CIFAR-10 and MNIST datasets under both IID and non-IID settings. As observed, our scheme outperforms other benchmarks, including ``RC + OP'' and ``RP + OC'' schemes, demonstrating that both splitting and placement play a crucial role in enhancing training efficiency. More importantly, all the schemes substantially outperform the ``No Pipeline'' scheme since the ``No Pipeline" approach requires almost $3\times$ to $7\times$ training time to reach the same accuracy level compared with other schemes. This underscores the effectiveness of pipelining in SL for multi-hop edge networks. By enabling parallel processing across multiple edge nodes, pipelined SL efficiently utilizes available resources, leading to faster model convergence without sacrificing accuracy.

\begin{figure}[t!]
	\centering\subfigure[\centering Total latency v.s. number of servers.]{ 
	{
			\centering
\includegraphics[width=3.85cm]{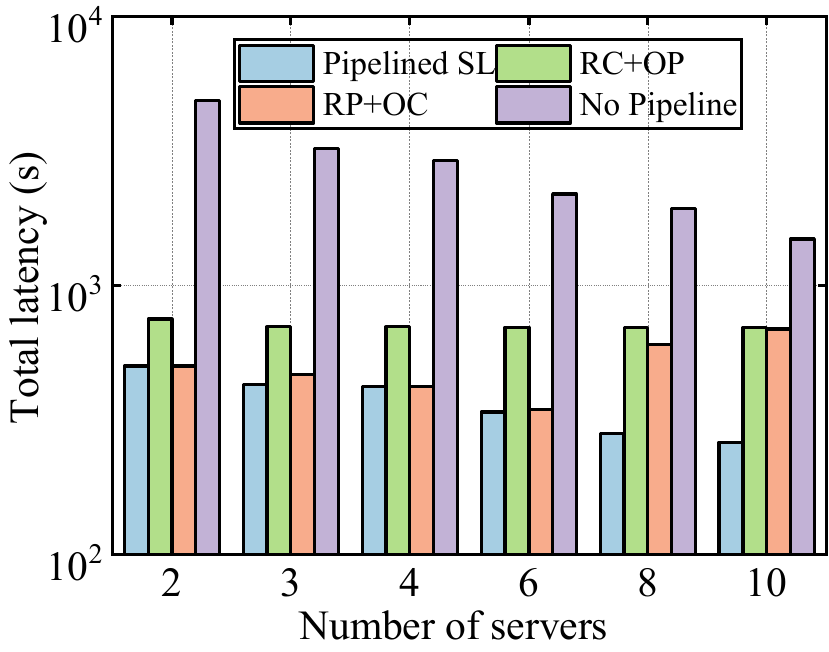}
		
	}}
	\centering\subfigure[\centering Total latency v.s. bandwidth.]{ 
	{
			\centering
			\includegraphics[width=3.85cm]{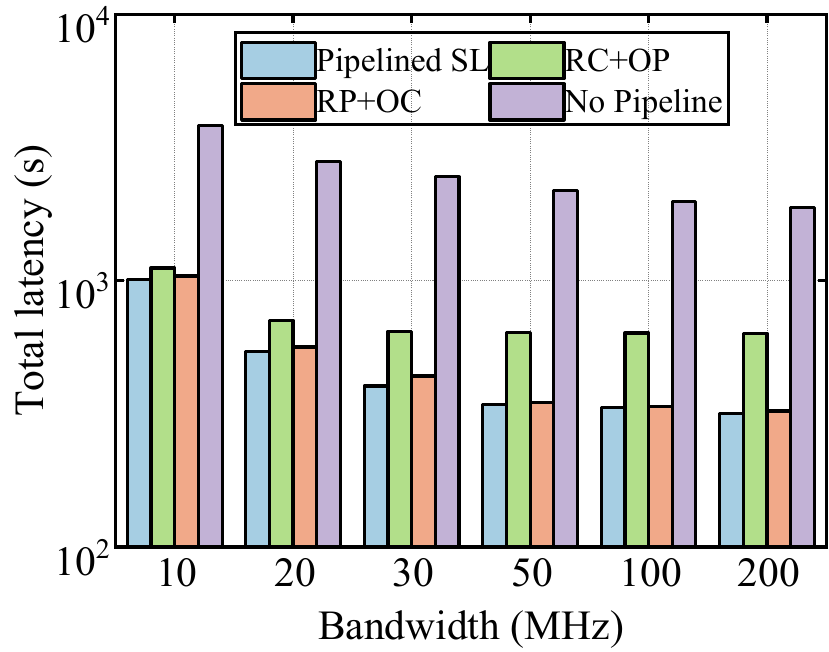}
	}}
 
        \centering\subfigure[\centering Total latency v.s. computing ability.]{ 
	{
			\centering
			\includegraphics[width=3.85cm]{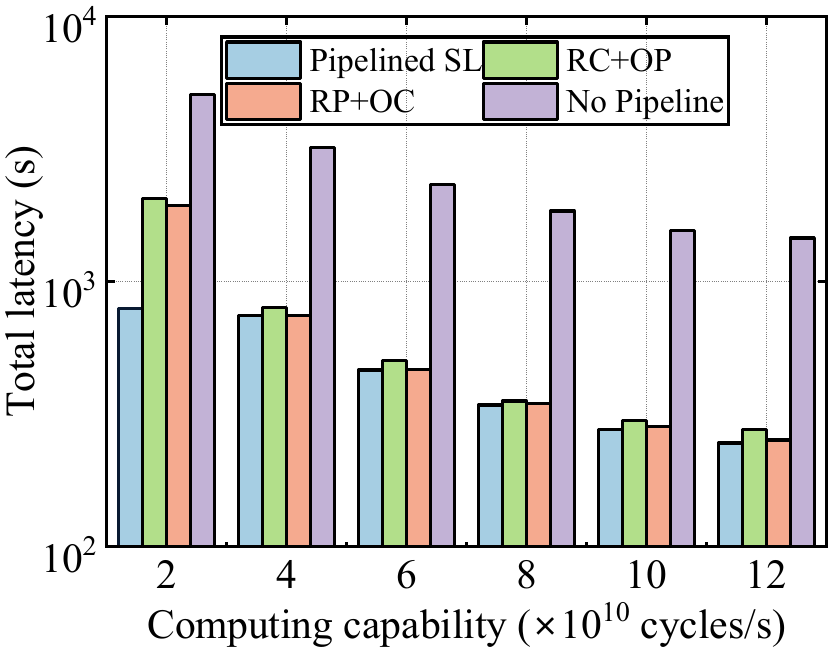}
	}}
	\centering\subfigure[\centering Total latency v.s. memory (B=4096).] {
	{
		
			\centering
			\includegraphics[width=3.85cm]{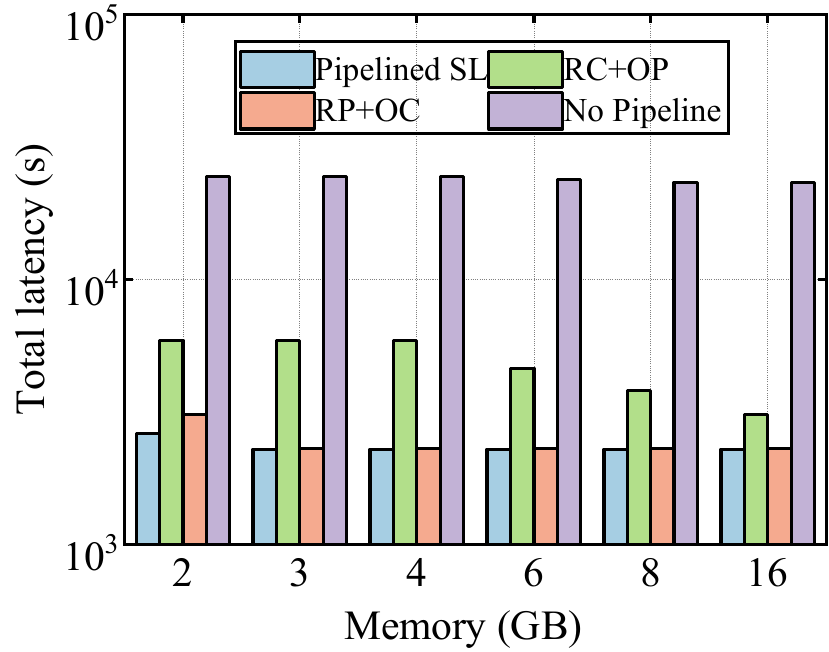}
		
	}}
	\centering\caption{The total latency versus varied network settings.}
		\label{Round_latency}
		\vspace{-0.5em}
\end{figure}

Notice that different schemes indeed have the same converged accuracy because these schemes are essentially equivalent to centralized training, except that their total latency is different. In Fig. \ref{Round_latency}, we vary the networking settings to illustrate the total latency among the aforementioned schemes. 
As shown in Fig.~\ref{Round_latency}(a), when we vary the number of servers from 2 to 10, the total latency for all schemes decreases significantly.  This reduction is due to increased parallelism and more efficient model splitting and placement across more servers. The edge network benefits from enhanced flexibility, allowing for better utilization of computational resources.
In Fig.~\ref{Round_latency}(b), we observe the impact of bandwidth on total latency ranging from 10~MHz to 200~MHz. The total latency of all schemes decreases as bandwidth increases. This demonstrates that higher communication bandwidth reduces data transmission time between edge servers, which is crucial for SL where frequent inter-server communication occurs.
Fig.~\ref{Round_latency}(c) illustrates the impact of computing capabilities on latency by varying computing capability from $2 \times 10^{10}$  to $12 \times 10^{10}$ cycles/s.  It is unsurprising to see that enhanced computing power at edge servers shortens the local processing time required for training.
Finally, Fig.~\ref{Round_latency}(d) shows the effect of memory capacity on total latency by varying it from 2~GB to 16~GB. The latency for all schemes decreases with increased memory capacity, because larger memory space allows each server to handle more layers or larger micro-batches, adding flexibility in model splitting and placement.

In edge networks, computing and communication resources often fluctuate during training, resulting in discrepancies between the measured conditions (used for optimization) and the actual network conditions. To assess the impact of these measurement errors, we model the fluctuations in data rates and computing capabilities by introducing Gaussian noise with varying coefficients of variation (CV)~\cite{10053757,yoo2024modeling}. As illustrated in Fig.~\ref{fluctuation}, our scheme maintains robustness across different levels of variation, resulting in only minor changes in overall latency. These results validate the effectiveness of our pipelined SL approach in fluctuating edge environments.

\begin{figure}[t!]
\centering\subfigure[\centering Computing capability uncertainty.] 
	{
	
			\centering\includegraphics[width=3.85cm]{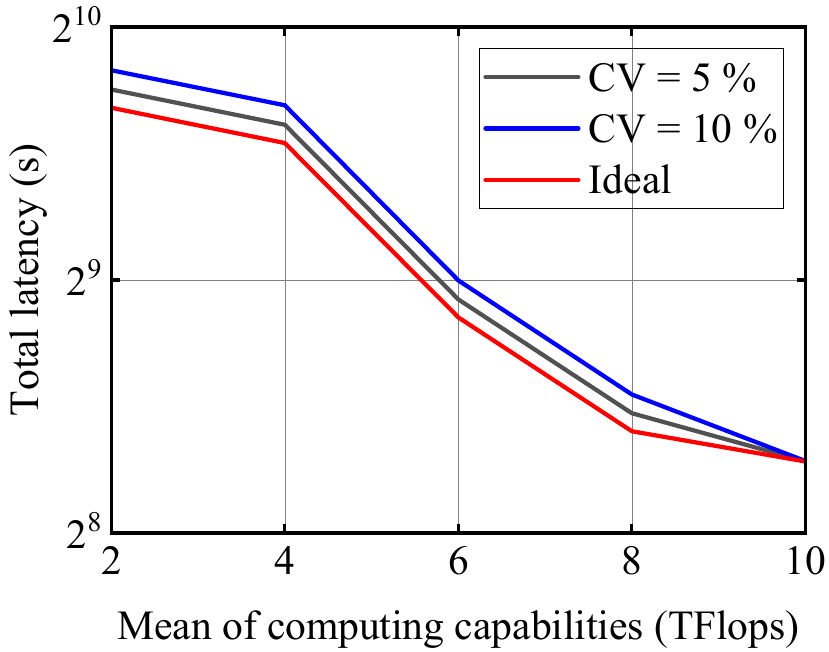}
	}
\subfigure[\centering Uplink transmission rate uncertainty.] 
	{
			\centering\includegraphics[width=3.85cm]{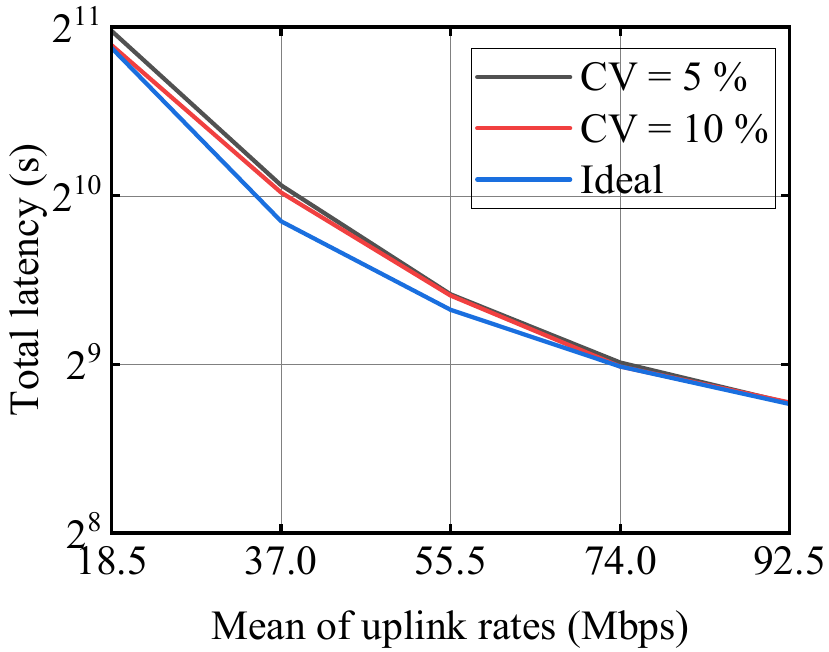}
	}
	\centering\caption{\centering The impact of network resource fluctuation on total latency on CIFAR-10 under IID setting using VGG-16.}
		\label{fluctuation}
		\vspace{-0.5em}
\end{figure}

\begin{figure}[t!]
\centering\subfigure[\centering Total latency.] 
	{
	
			\centering\includegraphics[width=3.85cm]{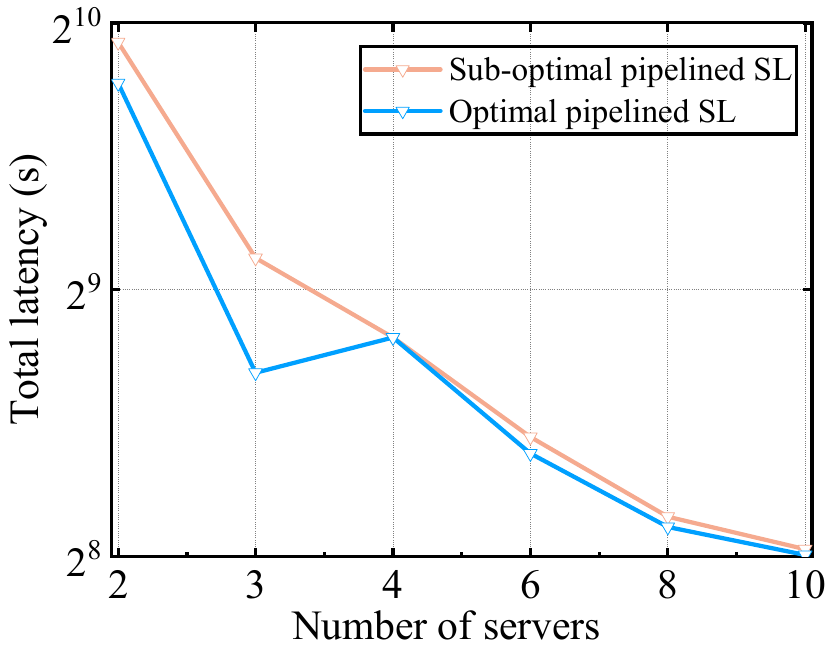}
		
	}
\centering\subfigure[Running time.] 
	{
		
			\centering\includegraphics[width=3.85cm]{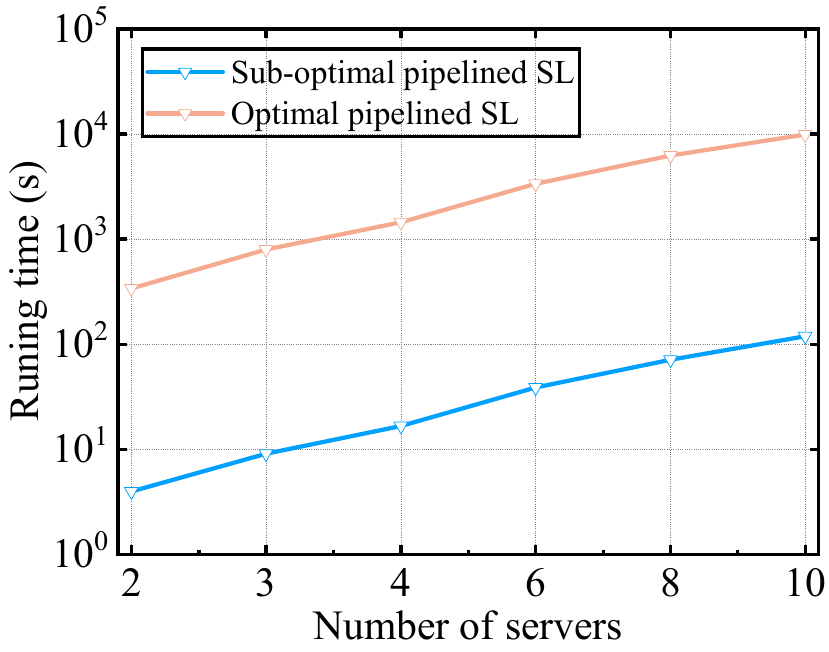}
	
	}
	\centering\caption{ The performance comparison of the sub-optimal and optimal pipelined SL.}
		\label{Compare_optimal}
		\vspace{-0.5em}
\end{figure}

\begin{figure}[t!]
\centering\subfigure[\centering Total latency v.s. network topology (N=8).] 
	{
	
			\centering\includegraphics[width=3.85cm]{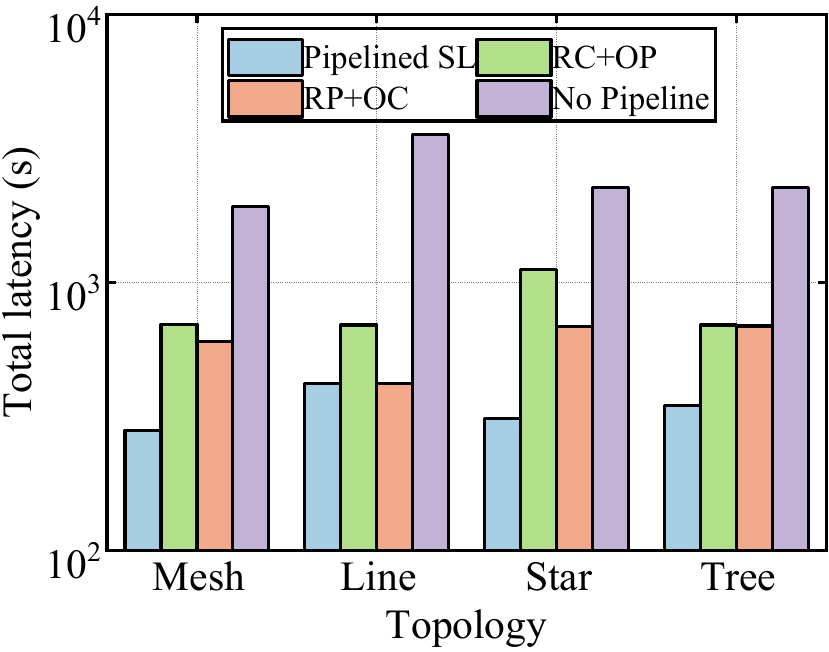}
		
	}
\centering\subfigure[\centering Total latency v.s. number of servers.] 
	{
		
			\centering\includegraphics[width=3.85cm]{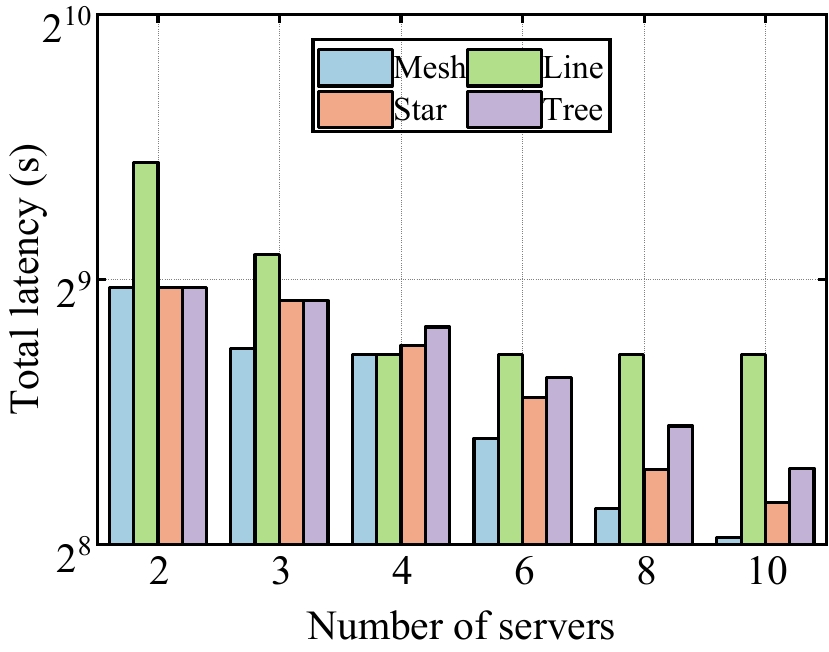}
	
	}
	\centering\caption{ Performance evaluation of the proposed pipelined SL framework across varied network topology.}
		\label{Compare_topology}
		\vspace{-0.5em}
\end{figure}

In Fig. \ref{Compare_optimal}, we evaluate the performance of our proposed suboptimal BCD algorithm and the optimal scheme, where the optimal scheme performs an exhaustive search on every possible micro-batch size, based on which we run Algorithm \ref{alg_1} to find the MSP solutions for each micro-batch size.
Specifically, 
in Fig. \ref{Compare_optimal}(a), we observe that the total latency of our suboptimal solution closely matches that of the optimal scheme. For instance, with 10 servers, the optimal scheme achieves a total latency of approximately $2.569 \times 10^{2}$ seconds, while the suboptimal BCD   algorithm attains approximately 
$2.609\times 10^{2}$ seconds. This confirms that suboptimal BCD algorithm effectively provides near-optimal performance.
Furthermore, Fig. \ref{Compare_optimal}(b) shows that as the number of servers increases from 2 to 10, the running time of the optimal scheme escalates drastically, reaching up to $9.884\times 10^3$ seconds (over 2.745 hours) at 10 servers. In contrast, the suboptimal BCD algorithm has a maximum running time of 119.132 seconds, demonstrating its scalability and efficiency. In practice, the running time is feasible even for the 10-server scenario since we consider a stationary edge network and the total training time can be for hours. In summary, our proposed algorithm achieves a suboptimal solution close to the optimal one while taking much less time to find the solution.

In Fig. \ref{Compare_topology},
 to validate the effectiveness of our solotion
to the MSP problem and the micro-batching problem across
different physical network topologies, 
we compare the total latency for varied network topologies, including mesh, line, star, and tree structures.
Specifically, Fig. \ref{Compare_topology}(a) illustrates the relationship between total latency and varied network topologies.  
 Mesh topology shows low latency due to high connectivity, while line topology has higher latency because of sequential connections. 
 In the star topology, the central node needs to forward messages between peripheral nodes, resulting in increased total latency due to the forwarding overhead at the central node.
 Tree topologies exhibit higher total latencies, attributed to their hierarchical communication constraints. 
In Fig. \ref{Compare_topology}(b),  We notice when 
$N=2$, the mesh, star, and tree topologies exhibit identical total latency, while the unidirectional line topology experiences the longest total latency.  Moreover, when we vary the number of servers from 2 to 10, the total latency for all network topologies decreases significantly.  This reduction is due to increased parallelism and more efficient model splitting and placement across more servers.



\section{Conclusion}\label{conclusion}
In this paper, we addressed the joint optimization problem of model splitting and placement (MSP) and micro-batching for pipelined split learning (SL) in multi-hop edge networks under resource constraints. By taking pipeline parallelism into account, we discovered that the resulting MSP problem has a combined min-max and min-sum objective. Based on graph theory, we devise a bottleneck-aware shortest-path algorithm to solve the MSP problem optimally. Moreover, with fixed MSP outcomes, we also obtain the optimal micro-batch size for pipelined SL under communication-computing resource constraints. To minimize end-to-end training latency, we introduce an iterative algorithm that effectively solves the joint optimization problem. Our simulation findings consistently demonstrate the superior performance of our proposed approach compared with SL without pipeline parallelism.
\appendices
\section{Proof of Theorem \ref{proposition_for_pipe}}
To optimize the micro-batching subproblem $\mathcal{P}3$ defined in (\ref{pipeline_schedulingsub-problem}), 
we consider the piecewise nature of the function induced by $b_{\mathrm{th}}^s$ and $b_{\mathrm{th}}^c$ and analyze each case separately. 
First, we arrange 
$
\left\lfloor \frac{b}{M} \right\rfloor
$
data samples to each of the first \(M-1\) clients and allocate the remaining
$
b - (M-1)\left\lfloor \frac{b}{M} \right\rfloor
$
data samples to the \(M\)-th client.
Thus, we have several situations:
\begin{figure*}[ht]
\begin{equation}
\begin{aligned}
    &\mathcal{Z}_1 =  b \bigg\{ {\sum_{k=2}^K{\sum_{n=1}^{N}{\frac{y_{kn}\kappa_n\delta _{kn}^{\mathrm{F}}}{f_n}}}}
+\sum\limits_{n=1}^{N}\sum_{n^{\prime}=1}^{N}\sum_{i=1}^{I} {\frac{\sum\limits_{k=2}^{K-1}{y_{kn}}y_{(k+1) n^{\prime}}  x_{ik}\varphi_i+\sum\limits_{k=2}^{K-1}{y_{kn}}y_{(k+1)n^{\prime}}x_{ik}\phi_{(i+1)}}{W_{nn^{\prime}}\cdot \log \left( 1+{p_n{(d_{nn^{\prime}})}^{-\gamma}}/{N_0} \right)}}
         \bigg\} +T_1 \bigg\lceil \frac{B-b}{b}\bigg \rceil  + (K-1)(t_0^s\\
          & +t_1^s) +\max_{n \in \mathcal{N}_c}\{b_m\kappa _n\delta_{1n}^{\mathrm{F}}
 /f_n+t_0^c + \sum\limits_{n^\prime=1}^Ny_{2n^\prime}b_m  \sum\limits_{i=1}^{I}x_{i1}\varphi_i/r_{nn^\prime}\}+\max_{n \in \mathcal{N}_c}\{t_1^c + \sum\limits_{n^\prime=1}^Ny_{2n^\prime}b_m  \sum\limits_{i=1}^{I}x_{i1}\phi_{(i+1)}/r_{nn^\prime}\}
          .
  \end{aligned}
   \label{zb1}
\end{equation}
 \end{figure*}
   \begin{figure*}[ht]
   \vspace{-0.8em}
  \begin{align}
{b_{v}}= 
\min \bigg\{& \min\limits_{n\in \mathcal{N}\setminus \mathcal{N}_c} \bigg\{ 
\bigg\lfloor \frac{M_n}{\big\{x_{i1}+\sum\limits_{k=2}^K{y_{kn} (x_{ik}-x_{i(k-1)})\big\}\sum\limits_{i=1}^{I}(
 {\widetilde \varphi} _i
 +\widetilde{\phi}_i + {\widetilde \sigma}_i + \beta_i)}} \bigg\rfloor, 
 \bigg\lfloor  \frac{f_n\{T_1-\sum\limits_{k=2}^K{y_{kn}}t_0^s\}}{\sum\limits_{k=2}^K{y_{kn}}\kappa _n\delta _{kn}^{\mathrm{F}}}\bigg\rfloor, \notag \\
& \quad \quad \quad   \quad \quad \bigg\lfloor\frac{T_1 {r_{nn^{\prime}}}}{{\sum\limits_{k=2}^{K-1}{y_{kn}}y_{(k+1) n^{\prime}} \sum\limits_{i=1}^{I}x_{ik}\varphi_i}}\bigg\rfloor ,\bigg\lfloor\frac{T_1 {r_{nn^{\prime}}}}{{\sum\limits_{k=2}^{K-1}{y_{kn}}y_{(k+1)n^{\prime}} \sum\limits_{i=1}^{I}x_{ik}\phi_{(i+1)}}}\bigg\rfloor\bigg\},
  \notag
 \\& \min\limits_{n\in  \mathcal{N}_c} \bigg\{\bigg\lfloor \frac{M\cdot M_n}{\sum\limits_{i=1}^Ix_{i1}(
 {\widetilde \varphi} _i
+\widetilde{\phi}_i + {\widetilde \sigma}_i + \beta_i)} \bigg\rfloor,\bigg\lfloor \frac{MT_1 {r_{nn^{\prime}}}}{\sum\limits_{i=1}^{I}x_{i1}\varphi_i}\bigg\rfloor,  
\bigg\lfloor \frac{MT_1 {r_{nn^{\prime}}}}{\sum\limits_{i=1}^{I}x_{i1}\phi_{(i+1)}}\bigg\rfloor,
   \bigg\lfloor     \frac{Mf_n(T_1-t_0^c)}{\kappa _n\delta _{1n}^{\mathrm{F}}} \bigg\rfloor\bigg\}\bigg\}.
 \label{b_prime}
\end{align}
 \end{figure*}
   
(1) When $0 < b\leqslant \min\{b_{\mathrm{th}}^c,b_{\mathrm{th}}^s\}$, the optimization Problem $\mathcal{P}3$ can be reformulated by 
\begin{align}
	\mathcal{P}3^{\prime }:\	&\underset{b}{\min}\  T_f(b)+\xi (b) \cdot T_1 \notag \\
		\mathrm{s.t.}\quad		
         & \mathrm{C3^{\prime }:}\   b\in \{1,2,...,B\}, \notag \\
         &\mathrm{C7}^{\prime}:	\   0 \leqslant   b_m\sum\limits_{i=1}^Ix_{i1}(
 {\widetilde \varphi} _i
+\widetilde{\phi}_i + {\widetilde \sigma}_i + \beta_i)\leqslant M_n,\notag\\
&\quad \quad \quad \quad \quad \quad \quad \quad \quad \quad \quad \quad  \quad \quad \quad \quad \quad  \forall n \in \mathcal{N}_c, \notag\\
&\mathrm{C8}^{\prime}:	\   0 \leqslant b\sum\limits_{i=1}^I (
 {\widetilde \varphi} _i
 +\widetilde{\phi}_i + {\widetilde \sigma}_i + \beta_i) \big\{x_{i1}
 + \sum\limits_{k=2}^K y_{kn}  \cdot \notag \\
          &\ \ \  \quad \quad  (x_{ik}-x_{i(k-1)}) \big\}\leqslant M_n,\ \forall n \in \mathcal{N}\setminus \mathcal{N}_c, \notag \\
 & \mathrm{C9}^{\prime }:\  b
  \leqslant \frac{f_n\{T_1-\sum\nolimits_{k=2}^K{y_{kn}}t_0^s\}}{\sum\limits_{k=2}^Ky_{kn} \kappa _n\delta _{kn}^{\mathrm{F}} },\  \forall   n\in\mathcal{N}\setminus \mathcal{N}_c, \notag  \\
   & \mathrm{C10}^{\prime }: \   b_m
 \leqslant \frac{f_n(T_1-t_0^c)}{\kappa _n\delta _{1n}^{\mathrm{F}}},\  \forall  n\in \mathcal{N}_c, \notag  \\
  &  \mathrm{C11^{\prime }:}\   b{\sum_{k=2}^{K-1}{y_{kn}}y_{(k+1) n^{\prime}} \sum_{i=1}^{I}x_{ik}\varphi_i}/{r_{nn^{\prime}}}\leqslant T_1, \notag \\
  &\ \ \ \ \ \ \ \ \ \forall    n,\ n^\prime\in\mathcal{N}\setminus \mathcal{N}_c, \notag  \\
                     &  \mathrm{C12^{\prime }:}\  b_m \sum_{i=1}^{I}x_{1k}\varphi_i/{r_{nn^{\prime}}}\leqslant T_1, \notag \\
                              &\ \ \ \ \ \ \ \ \ \forall   n\in\mathcal{N}_c,\ n^\prime\in\mathcal{N}\setminus \mathcal{N}_c, \notag  \\
                              &  \mathrm{C15^{\prime }:}\   b{\sum_{k=2}^{K-1}{y_{kn}}y_{(k+1)n^{\prime}} \sum_{i=1}^{I}x_{ik}\phi_{(i+1)}}/{r_{nn^{\prime}}}\leqslant T_1, \notag 
                              \\
                               &\ \ \ \ \ \ \ \ \ \forall    n,\ n^\prime\in \mathcal{N}\setminus \mathcal{N}_c, \notag  \\
                              &    \mathrm{C16^{\prime }:}\    b_m{ \sum_{i=1}^{I}x_{i1}\phi_{(i+1)}}/{r_{nn^{\prime}}}\leqslant T_1, \notag \\
                              &\ \ \ \ \ \ \ \ \ \forall   n\in\mathcal{N}_c,\ n^\prime\in\mathcal{N}\setminus \mathcal{N}_c.       
	\end{align}
 Herein, we try to obtain the optimal solution to Problem $\mathcal{P}3^{\prime }$. Since we fix the decision variables $x_{ik}$, $y_{kn}$, and $T_1$, the objective function $\mathcal{Z}_1$ is only related to the variable $b$, which is expressed in Eq. (\ref{zb1}).

 According to the shape of the objective function, we can find the optimal point $b_1$ in the feasible region (if it exists), which is expressed in Eq. (\ref{b_1}).
Moreover, according to Constraints C3$^\prime$ and C7$^\prime$\text{ - }C12$^\prime$, the boundary variable $b_v$ is defined in Eq. (\ref{b_prime}).
To further minimize the value of the objection function while remaining in the feasible domain, the optimal micro-batch size is

\begin{equation}
     b_1^*=\left\{ \begin{array}{l}
	1, \ b_1\leqslant 1,\\
	b_1,\ 1<b_1\leqslant \min \left\{ {b_{v}} ,B\right\}, \\
	\min \left\{ {b_{v}} ,B\right\},\ \min \left\{ {b_{v}},B \right\}<b_1,\\
\end{array} \right. 
\end{equation}
where $b_1 = \underset{b \in \{\lfloor \widetilde{b}_1 \rfloor, \lceil \widetilde{b}_1 \rceil\}}{\arg\min} T_f(b)+\xi (b) \cdot T_1$,
  \begin{align}
   & \widetilde{b}_1 = \bigg({B \cdot T_1}\big({{\sum\limits_{k=2}^K{\sum\limits_{n=1}^{N}{{y_{kn}\kappa_n\delta _{kn}^{\mathrm{F}}}/{f_n}}}}
+ \sum\limits_{n=1}^{N}\sum\limits_{n^{\prime}=1}^{N}\sum\limits_{i=1}^{I} \varLambda  } \notag\\
         & +\max_{n \in \mathcal{N}_c}\{\frac{1}{M}\kappa _n\delta _{1n}^{\mathrm{F}}
 /f_n+\sum\limits_{n^\prime=1}^Ny_{2n^\prime}\frac{1}{M} \cdot \sum\limits_{i=1}^{I}x_{i1}\varphi_i/r_{nn^\prime}\}
 +\notag \\
 &\max_{n \in  \mathcal{N}_c}\{\sum\limits_{n^\prime=1}^Ny_{2n^\prime}\frac{1}{M} \cdot \sum\limits_{i=1}^{I}x_{i1}\phi_{(i+1)}/r_{nn^\prime}\}
 \big)^{-1}\bigg)^{\frac{1}{2}}.
          \label{b_1}
  \end{align}
and
\begin{equation}
  \begin{aligned}
\varLambda =\frac{\sum\limits_{k=2}^{K-1}{y_{kn}}y_{(k+1) n^{\prime}}  x_{ik}\varphi_i+\sum\limits_{k=2}^{K-1}{y_{kn}}y_{(k+1)n^{\prime}} x_{ik}\phi_{(i+1)}}{r_{nn^{\prime}}}.
            \end{aligned}
\end{equation}

 \begin{figure*}[ht]
 \vspace{-2em}
\begin{equation}
\begin{aligned}
    &\mathcal{Z}_2 =  b \bigg\{ {\sum_{k=2}^K{\sum_{n=1}^{N}{\frac{y_{kn}\kappa_n(\delta _{kn}^{\mathrm{F}}+\delta _{kn}^{\mathrm{B}})}{f_n}}}}
+\sum\limits_{n=1}^{N}\sum_{n^{\prime}=1}^{N}\sum_{i=1}^{I} {\frac{\sum\limits_{k=2}^{K-1}{y_{kn}}y_{(k+1) n^{\prime}}  x_{ik}\varphi_i+\sum\limits_{k=2}^{K-1}{y_{kn}}y_{(k+1)n^{\prime}}x_{ik}\phi_{(i+1)}}{W_{nn^{\prime}}\cdot \log \left( 1+{p_n{(d_{nn^{\prime}})}^{-\gamma}}/{N_0} \right)}}
         \bigg\} +T_1 \bigg\lceil \frac{B-b}{b}\bigg \rceil -\\
          &     {{{(K-1)\bigg\{\frac{\kappa_n\delta _{kn}^{\mathrm{B}}b_{\mathrm{th}}^s}{f_n^s}}-(t_0^s+t_1^s)\bigg\}}}+\max_{n \in\mathcal{N}_c}\{b_m\frac{\kappa _n\delta _{1n}^{\mathrm{F}}}
 {f_n}+t_0^c + \sum\limits_{n^\prime=1}^Ny_{2n^\prime}b_m \sum\limits_{i=1}^{I}\frac{x_{i1}\varphi_i}{r_{nn^\prime}}\}\\
          &+\max_{n \in \mathcal{N}_c}\{\big(b_m -b_{\mathrm{th}}^c \big)\frac{\kappa _n\delta _{1n}^{\mathrm{B}}}
 {f_n}+t_1^c + \sum\limits_{n^\prime=1}^Ny_{2n^\prime}b_m  \sum\limits_{i=1}^{I}\frac{x_{i1}\phi_{(i+1)}}{r_{nn^\prime}}\}
          .
          \label{zb2}
  \end{aligned}
\end{equation}
 \end{figure*}

 (2) When $0<\max\{b_{\mathrm{th}}^s,b_{\mathrm{th}}^c\}\leqslant b $, the optimization Problem $\mathcal{P}3$ can be reformulated by
	\begin{align}
	\mathcal{P}3^{\prime \prime }:\	&\underset{b}{\min}\  T_f(b)+\xi (b) \cdot T_1 \notag \\
		\mathrm{s.t.}
        \quad 
          & \mathrm{C3}^{\prime},\ \mathrm{C7}^{\prime}\text{ - }\mathrm{C12}^{\prime},  \ \mathrm{C15}^{\prime}\text{ - }\mathrm{C16}^{\prime},\notag \\
                             & \mathrm{C13}^{\prime  }:\  b\leqslant \frac{f_n\{T_1-\sum\limits_{k=2}^K{y_{kn}}t_1^s\}}{\sum\nolimits_{k=2}^K{y_{kn}}\kappa _n\delta _{kn}^{\mathrm{B}}}+b_{\mathrm{th}}^s,\  \forall  n\in\mathcal{N}\setminus \mathcal{N}_c, \notag  \\
                              & \mathrm{C14}^{ \prime }:\  b
 \leqslant \frac{Mf_n(T_1-t_1^c)}{\kappa _n\delta _{1n}^{\mathrm{B}}}+Mb_{\mathrm{th}}^c,\  \forall  n\in\mathcal{N}_c.   
	\end{align}

 Similarly, to minimize the value of the objection function $\mathcal{Z}_2$ in Eq. (\ref{zb2}), the following variable $b_2^*$ is set to: 
\begin{equation}
     b_2^*=\left\{ \begin{array}{l}
	1, \ b_2 \leqslant 1,\\
	b_2,\ 1<b_2\leqslant\min \left\{ {b_v}^\prime,B \right\}, \\
	\min \left\{ {b_v}^\prime,B \right\},\ \min \left\{ {b_v}^\prime,B \right\}<b_2,\\
\end{array} \right. 
\end{equation}
where $b_2 = \underset{b \in \{\lfloor \widetilde{b}_2 \rfloor, \lceil \widetilde{b}_2 \rceil\}}{\arg\min} T_f(b)+\xi (b) \cdot T_1$. Herein, $\widetilde{b}_2$  and ${b_{v}}^\prime$ can be denoted by
\begin{align}
   & \widetilde{b}_2 =\bigg\lceil \bigg({B \cdot T_1}\big({{\sum\limits_{k=2}^K{\sum\limits_{n=1}^{N}{\frac{y_{kn}\kappa_n(\delta _{kn}^{\mathrm{F}}+\delta _{kn}^{\mathrm{B}})}{f_n}}}}
+\sum\limits_{n=1}^{N}\sum\limits_{n^{\prime}=1}^{N} \sum\limits_{i=1}^{I}
{\varLambda }} \notag\\
         & +\max_{n \in \mathcal{N}_c}\{\frac{1}{M}\kappa _n\frac{\delta _{1n}^{\mathrm{F}}}
 {f_n} + \sum\limits_{n^\prime=1}^Ny_{2n^\prime}\frac{1}{M} \cdot \sum\limits_{i=1}^{I}\frac{x_{i1}\varphi_i}{r_{nn^\prime}}\}
 +\max_{n \in \mathcal{N}_c}\{\frac{1}{M}\notag
 \\
 &\kappa_n \delta _{1n}^{\mathrm{B}}
 /f_n + \sum\limits_{n^\prime=1}^Ny_{2n^\prime}\frac{1}{M} \cdot \sum\limits_{i=1}^{I}x_{i1}\phi_{(i+1)}/r_{nn^\prime}\}
 \big)^{-1}\bigg)^{\frac{1}{2}}\bigg\rceil.
          \label{b_2}
  \end{align}
\begin{equation}
\begin{aligned}
{b_{v}}^\prime=& \min \bigg\{b_v, \min\limits_{ n \in \mathcal{N}\setminus \mathcal{N}_c}\bigg\{\bigg\lfloor  \frac{f_n\{T_1-\sum\limits_{k=2}^K{y_{kn}}t_0^s\}}{\sum\nolimits_{k=2}^K{y_{kn}}\kappa _n\delta _{kn}^{\mathrm{B}}}+b_{\mathrm{th}}^s\bigg\rfloor\bigg\}, \\
&\quad \quad \quad \quad \quad \quad   \min\limits_{ n \in \mathcal{N}_c}\bigg\{\bigg\lfloor\frac{Mf_n(T_1-t_1^c)}{\kappa _n\delta _{1n}^{\mathrm{B}}}+Mb_{\mathrm{th}}^c\bigg\rfloor \bigg\}
\bigg\}.
 \end{aligned}
 \label{b^prime}
\end{equation}

(3) Similarly, when $0<b_{\mathrm{th}}^c<b < b_{\mathrm{th}}^s $, the optimization Problem $\mathcal{P}3$ can be reformulated by
	\begin{align}
	\mathcal{P}3^{\prime \prime \prime }:\	&\underset{b}{\min}\  T_f(b)+\xi (b) \cdot T_1 \notag \\
		\mathrm{s.t.}
        \quad 
          & \mathrm{C3}^{\prime},\ \mathrm{C7}^{\prime}\text{ - }\mathrm{C12}^{\prime},  \ \mathrm{C15}^{\prime}\text{ - }\mathrm{C16}^{\prime},\notag \\
                              & \mathrm{C14}^{\prime  }:\  b
 \leqslant \frac{Mf_n(T_1-t_1^c)}{\kappa _n\delta _{1n}^{\mathrm{B}}}+Mb_{\mathrm{th}}^c,\  \forall  n\in\mathcal{N}_c,   
	\end{align}

To minimize the value of the objection function, the following variable $b_3^*$ is set to
\begin{equation}
     b_3^*=\left\{ \begin{array}{l}
	1, \ b_3\leqslant 1,\\
	b_3,\ 1<b_3\leqslant\min \left\{ {b_{v}}^{\prime \prime},B \right\}, \\
	\min \left\{ {b_{v}}^{\prime \prime},B \right\},\ \min \left\{ {b_{v}}^{\prime \prime},B \right\}<b_3,\\
\end{array} \right. 
\end{equation}
where $b_3 = \underset{b \in \{\lfloor \widetilde{b}_3 \rfloor, \lceil \widetilde{b}_3 \rceil\}}{\arg\min} T_f(b)+\xi (b) \cdot T_1$, and $\widetilde{b}_3$   can be expressed by
\begin{equation}
\begin{aligned}
   & \widetilde{b}_3 =\bigg\lceil \bigg({B \cdot T_1}\big({{\sum\limits_{k=2}^K{\sum\limits_{n=1}^{N}{\frac{y_{kn}\kappa_n\delta _{kn}^{\mathrm{F}}}{f_n}}}}
+\sum\limits_{n=1}^{N}\sum\limits_{n^{\prime}=1}^{N} \sum\limits_{i=1}^{I}
\varLambda 
          } +\max_{n \in \mathcal{N}_c}\\
         & \{\frac{1}{M}\frac{\kappa _n\delta _{1n}^{\mathrm{F}}}
 {f_n} + \sum\limits_{n^\prime=1}^Ny_{2n^\prime}\frac{1}{M} \cdot \sum\limits_{i=1}^{I}x_{i1}\varphi_i/r_{nn^\prime}\}
 +\max_{n \in \mathcal{N}_c}\{\frac{1}{M}\\
 &\kappa _n\delta _{1n}^{\mathrm{B}}
 /f_n + \sum\limits_{n^\prime=1}^Ny_{2n^\prime}\frac{1}{M} \cdot \sum\limits_{i=1}^{I}x_{i1}\phi_{(i+1)}/r_{nn^\prime}\}
 \big)^{-1}\bigg)^{\frac{1}{2}}\bigg\rceil,
          \label{b_3}
  \end{aligned}
  \end{equation}
 and ${b_{v}}^{\prime \prime}$ is denoted by
\begin{equation}
\begin{aligned}
&{b_{v}}^{\prime\prime}= \min \bigg\{ 
b_v,  \min\limits_{ n \in \mathcal{N}_c}\bigg\{  \bigg\lfloor \frac{nf_n(T_1-t_1^c)}{\kappa _n\delta _{1n}^{\mathrm{B}}}+Mb_{\mathrm{th}}^c\bigg\rfloor
\bigg\}\bigg\}.
 \end{aligned}
 \label{b^prime^prime}
\end{equation}


(4) Similarly, when $0<  b_{\mathrm{th}}^s<  b < b_{\mathrm{th}}^c $, the optimization Problem $\mathcal{P}3$ can be reformulated by
	\begin{align}
	\mathcal{P}3^{\prime \prime \prime \prime}:\	&\underset{b}{\min}\  T_f(b)+\xi (b) \cdot T_1 \notag \\
		\mathrm{s.t.}
        \quad 
          & \mathrm{C3}^{\prime},\ \mathrm{C7}^{\prime}\text{ - }\mathrm{C12}^{\prime},  \ \mathrm{C15}^{\prime}\text{ - }\mathrm{C16}^{\prime},\notag \\
                             & \mathrm{C13}^{\prime}:\  b\leqslant \frac{f_n\{T_1-\sum\limits_{k=2}^K{y_{kn}}t_1^s\}}{\sum\limits_{k=2}^K{y_{kn}}\kappa _n\delta _{kn}^{\mathrm{B}}}+b_{\mathrm{th}}^s,\  \forall  n\in\mathcal{N}\setminus \mathcal{N}_c.
	\end{align}

To minimize the value of the objection function, the following variable $b_4^*$ is set to 
\begin{equation}
     b_4^*=\left\{ \begin{array}{l}
	1, \ b_4\leqslant 1,\\
	b_4,\ 1<b_4\leqslant\min \left\{ {b_{v}}^{\prime \prime \prime},B \right\}, \\
	\min \left\{ {b_{v}}^{\prime \prime \prime},B \right\},\ \min \left\{ {b_{v}}^{\prime \prime \prime},B \right\}<b_4,\\
\end{array} \right. 
\end{equation}
where $b_4 = \underset{b \in \{\lfloor \widetilde{b}_4 \rfloor, \lceil \widetilde{b}_4 \rceil\}}{\arg\min} T_f(b)+\xi (b) \cdot T_1 $  and $\widetilde{b}_4$ is defined by
\begin{equation}
    \begin{aligned}
   & \widetilde{b}_4 =\bigg\lceil \bigg({B \cdot T_1}\big({{\sum\limits_{k=2}^K{\sum\limits_{n=1}^{N}
   {\frac{y_{kn}\kappa_n(\delta _{kn}^{\mathrm{F}}+\delta _{kn}^{\mathrm{B}})}{f_n}}}}
+\sum\limits_{n=1}^{N}\sum\limits_{n^{\prime}=1}^{N} \sum\limits_{i=1}^{I}
\varLambda 
          } \\
         & +\max_{n \in \mathcal{N}_c}\{\frac{1}{M}\frac{\kappa _n\delta _{1n}^{\mathrm{F}}}
 {f_n} + \sum\limits_{n^\prime=1}^Ny_{2n^\prime}\frac{1}{M}  \sum\limits_{i=1}^{I}\frac{x_{i1}\varphi_i}{r_{nn^\prime}}\}
 +\max_{n \in \mathcal{N}_c}\{ \sum\limits_{n^\prime=1}^N\\
 &y_{2n^\prime}\frac{1}{M} \cdot \sum\limits_{i=1}^{I}x_{i1}\phi_{(i+1)}/r_{nn^\prime}\}
 \big)^{-1}\bigg)^{\frac{1}{2}}\bigg\rceil.
          \label{b_4}
  \end{aligned}
\end{equation}
and ${b_{v}}^{\prime \prime \prime}$  is denoted by
\begin{equation}
\begin{aligned}
&{b_{v}}^{\prime\prime\prime}= \min \bigg\{ 
b_v,\min\limits_{n\in \mathcal{N}\setminus \mathcal{N}_c}\bigg\{\bigg\lfloor  \frac{f_n\{T_1-\sum\limits_{k=2}^K{y_{kn}}t_0^s\}}{\sum\nolimits_{k=2}^K{y_{kn}}\kappa _n\delta _{kn}^{\mathrm{B}}}+b_{\mathrm{th}}^s\bigg\rfloor
\bigg\}\bigg\}.
 \end{aligned}
 \label{b^prime^prime^prime}
\end{equation}

  Thus, the proof is completed.

  \section{Reformulation and Linearization of MSP Problem}
 To compute a lower bound for the min-sum
objective in Problem $\mathcal{P}4$, we need to reformulate and linearize the original problem. Above all, we define $\boldsymbol{\mu} = \big\{\mu_{knik}|\ k \in [1, K-1], n \in \mathcal{N}\big\}$ and $\boldsymbol{\zeta} = \big\{\zeta  _{knik (k+1) n^\prime i^\prime (k+1)}|\ k \in [1, K-1],\ n,\ n^\prime \in \mathcal{N}\big\}$. Moreover, $\mu _{knik}\in[0,1]$ and $\zeta  _{knik (k+1) n^\prime i^\prime (k+1)}\in[0,1]$ are continuous variables  and expressed by   
          \begin{align}
      &\mu _{knik}=y_{kn}\sum_{i=1}^Ix_{ik}, 
      \ \mu _{kni(k-1)}=y_{kn}\sum_{i=1}^Ix_{i(k-1)}, \notag \\
      &\zeta  _{knik (k+1) n^\prime i^\prime (k+1)}=y_{kn}\sum_{i=1}^Ix_{ik} \cdot y_{(k+1) n^{\prime}}\sum_{i^\prime=1}^Ix_{i^\prime (k+1)}.  
  \end{align}

In this section, considering the piecewise nature of the function induced by $b_{\mathrm{th}}^s$ and $b_{\mathrm{th}}^c$ and analyze each case separately, we have following situations:
  
  (1) When $0 < b\leqslant \min\{b_{\mathrm{th}}^c,b_{\mathrm{th}}^s\}$, the reformulation of min-sum part in $\mathcal{P}4$ can be expressed by
  \begin{align}
    \mathcal{P}4^\prime: &\underset{\{ \boldsymbol{\mu}, \boldsymbol{\zeta} \}}{\mathrm{min}} \quad	T_f(\boldsymbol{\mu}, \boldsymbol{\zeta}) =  
\mathop b\max\limits_{n \in \mathcal{N}_c}\{\sum_{n^{\prime}=1}^N{}\frac{\zeta _{1ni12n^{\prime}i^{\prime}2}\phi _{(i+1)}}{Mr_{nn^{\prime}}}\}+
 \notag \\
&
A(\boldsymbol{\mu },\boldsymbol{\zeta })+b\{\sum_{k=2}^K{\sum_{n=1}^N{\kappa _{n}\left( \left( \mu _{knik}-\mu _{kni(k-1)} \right) w_i \right) /f_{n}}}\}
,\notag\\
\mathrm{s.t.}\quad
          & \mathrm{C5^\dagger:}\quad  \sum_{i=1}^{I^\prime }\mu _{knik} \leqslant \sum_{i=1}^{I^\prime }\mu _{kni(k-1)},
          \notag \\
          & \quad \quad \quad  \quad \quad  \quad  \quad \quad  \quad 
 \  \forall k \in [2,K-1],\ I^\prime \in [1, I], \notag \\
          &\mathrm{C6^\dagger:}	\quad  \sum_{n=1}^N \mu _{knik}=1,\ \forall k \in [1,K], \notag \\
  		&\mathrm{C7^\dagger:}	\quad   0 \leqslant   m_{1}(\mu _{1ni1},b)\leqslant M_n,\ \forall n \in \mathcal{N}_c, \notag\\
    &\mathrm{C8^\dagger:}	\quad   0 \leqslant  \sum_{k=2}^K  
b(\mu _{knik}-\mu _{kni(k-1)})\cdot\notag\\&\quad \quad \quad \quad \quad (\widetilde{\varphi }_i+\widetilde{\phi }_i+\widetilde{\sigma }_i+\beta _i)
\leqslant M_n,\ \forall n \in \mathcal{N}\setminus \mathcal{N}_c,\notag\\
                    &\mathrm{C17^\dagger:}\ \sum_{k=1}^{K-1} \zeta  _{knik(k+1) n^\prime i^\prime (k+1)}=\mu _{(k+1) n^\prime i^\prime (k+1)}, \notag \\&
                    \quad \quad \quad \quad \quad \quad \quad \quad \quad \quad \quad \quad \quad \quad \quad \forall n,\ n^\prime\in\mathcal{N}, \notag \\
                    &\mathrm{C18^\dagger:}\ \sum_{k=2}^K \zeta  _{knik(k+1) n^\prime i^\prime (k+1) } = \mu _{knik}, \ \forall n,\ n^\prime\in\mathcal{N}. 
  \end{align}

 Moreover, $A(\boldsymbol{\zeta})$ can be expressed by
  \begin{align}
&A(\boldsymbol{\mu },\boldsymbol{\zeta })=b\sum_{n=1}^N{\sum_{n^{\prime}=1}^N{\sum_{k=2}^{K-1}\frac{{\zeta _{knik(k+1)n^{\prime}i^{\prime}(k+1)}}\left( \varphi _i+\phi _{(i+1)} \right)}{r_{nn^{\prime}}}}}+
\notag\\
&\mathop {b} \max\limits_{n \in \mathcal{N}_c}\{\frac{\kappa _n\mu _{1ni1}w_i}{Mf_{n}}+\sum_{n^{\prime}=1}^N{}\frac{\zeta _{1ni12n^{\prime}i^{\prime}2}\varphi _i}{Mr_{nn^{\prime}}}\}+K(t_{0}^{S}+t_{1}^{S}).
    \end{align}

  (2) When $0<\max\{b_{\mathrm{th}}^s,b_{\mathrm{th}}^c\}\leqslant b $, the optimization Problem $\mathcal{P}4$ can be reformulated by
 \begin{align}
    \mathcal{P}4^{\prime\prime}: &\underset{\{ \boldsymbol{\mu}, \boldsymbol{\zeta} \}}{\mathrm{min}} \quad	T_f(\boldsymbol{\mu}, \boldsymbol{\zeta}) = 
b\{\sum_{k=2}^K\sum_{n=1}^N\kappa _{n}\big(( \mu _{knik}-\mu _{kni(k-1)} ) w_i
         \notag\\& +( \mu _{knik}-\mu _{kni(k-1)} ) \rho _i\big)/f_{n}\}
-(K-1)\{\frac{\kappa _{n}\delta _{kn}^{\mathrm{B}}b_{\mathrm{th}}^{s}}{f_{n}}\}+\notag\\&b\max_{n \in \mathcal{N}_c} \{(\frac{1}{M}-b_{\mathrm{th}}^{c})\frac{\kappa _n\mu _{1ni1}\rho _i}{f_n}+\sum_{n^{\prime}=1}^N{}\frac{\zeta _{1ni12n^{\prime}i^{\prime}2}\phi _{(i+1)}}{Mr_{nn^{\prime}}}\}
\notag\\& +A(\boldsymbol{\mu },\boldsymbol{\zeta })
 , \notag \\
\mathrm{s.t.}\quad
          &  \mathrm{C5^\dagger}\text{ - } \mathrm{C8^\dagger}, \ \mathrm{C17^\dagger}, \ \mathrm{C18^\dagger}.
  \end{align}

  (3) Similarly, when $0<b_{\mathrm{th}}^c<b < b_{\mathrm{th}}^s $, the optimization Problem $\mathcal{P}4$ can be reformulated by
 \begin{align}
    \mathcal{P}4^{\prime\prime\prime}: &\underset{\{ \boldsymbol{\mu}, \boldsymbol{\zeta} \}}{\mathrm{min}} \quad	T_f(\boldsymbol{\mu}, \boldsymbol{\zeta}) =
\mathop b\max\limits_{n \in  \mathcal{N}_c}\{\sum_{n^{\prime}=1}^N\frac{\zeta _{1ni12n^{\prime}i^{\prime}2}\phi _{(i+1)}}{Mr_{nn^{\prime}}^{c}}\}
 \notag\\&  
+A(\boldsymbol{\mu },\boldsymbol{\zeta })+b\{\sum_{k=2}^K\sum_{n=1}^N\kappa _{n}\big(( \mu _{knik}-\mu _{kni(k-1)}) w_i+ \notag\\& ( \mu _{knik}-\mu _{kni(k-1)}) \rho _i\big)/f_{n}\}
 , \notag \\
\mathrm{s.t.}\quad
          & \mathrm{C5^\dagger}\text{ - } \mathrm{C8^\dagger}, \ \mathrm{C17^\dagger},\ \mathrm{C18^\dagger}.
  \end{align}

  (4) Similarly, when $0<  b_{\mathrm{th}}^s<  b < b_{\mathrm{th}}^c $, the optimization Problem $\mathcal{P}4$ can be reformulated by
 \begin{align}
    \mathcal{P}4^{\prime\prime\prime\prime}: &\underset{\{ \boldsymbol{\mu}, \boldsymbol{\zeta} \}}{\mathrm{min}} \quad	T_f(\boldsymbol{\mu}, \boldsymbol{\zeta}) =   
A(\boldsymbol{\mu },\boldsymbol{\zeta })-(K-1)\{\frac{\kappa _{n}\delta _{kn}^{\mathrm{B}}b_{\mathrm{th}}^{s}}{f_{n}}\}
+ \notag \\&
+b\{\sum_{k=2}^K{\sum_{n=1}^N{\kappa _{n}\left( \mu _{knik}-\mu _{kni(k-1)} \right) w_i/f_{n}}}\}+\mathop b \max\limits_{n \in  \mathcal{N}_c}
\notag \\
& 
\{(\frac{1}{M}-b_{\mathrm{th}}^{c})\frac{\kappa _n\mu _{1ni1}\rho _i}{f_n}+\sum_{n^{\prime}=1}^N{}\frac{\zeta _{1ni12n^{\prime}i^{\prime}2}\phi _{(i+1)}}{Mr_{nn^{\prime}}}\}, \notag \\
\mathrm{s.t.}\quad
          & \mathrm{C5^\dagger}\text{ - } \mathrm{C8^\dagger}, \ \mathrm{C17^\dagger},\ \mathrm{C18^\dagger}.
  \end{align}

By reformulating and linearizing the original min-sum part, it proves that the optimized value of $T_f(\boldsymbol{\mu}, \boldsymbol{\zeta})$ function does not exceed the optimal value of the original min-sum problem, thus providing a lower bound  \cite{FRIEZE198389}.

\bibliographystyle{IEEEtran}
\bibliography{citationlist}

\end{CJK}
\end{document}